\documentclass{article}
\usepackage{amsmath,amsbsy,amsfonts,amssymb,amsthm,amscd,amsxtra,amstext,amsopn}
\usepackage{mathtools,emptypage,dsfont,yfonts,latexsym,mathrsfs,bbold}
\usepackage[left=0.6in,right=1.1in]{geometry}
\usepackage[fontsize=11pt]{scrextend}
\usepackage{xcolor}
\usepackage{comment}
\usepackage{marginnote}

\newcommand \ul[1]{\underline{#1}}
\renewcommand\v[1]{\vec{#1}}

\newcommand \Id{\mathrm{Id}}

\newcommand*\R{ \mathds{R} }

\newcommand*\N{ \mathds{N} }

\newcommand*\cL{ \mathcal{L} }

\newcommand\uv{{\underline v}}
\newcommand\uu{{\underline u}}

\newcommand\uxi{{\underline \xi}}
\newcommand\ueta{{\underline \eta}}

\newcommand\vv{{\vec v}}
\newcommand\vu{{\vec u}}
\newcommand\vw{{\vec w}}
\newcommand\vxi{{\vec \xi}}
\newcommand\veta{{\vec \eta}}

\newtheorem{mylemma}{Lemma}

\newtheorem{Prop}{Proposition}

\newtheorem{Remark}{Remark}

\newtheorem{mythm}{Theorem}

\newtheorem{myclaim}{Claim}

\newtheorem{corollary}[mythm]{Corollary}

\DeclarePairedDelimiter{\scal}{\langle}{\rangle}

\title{A Kac system interacting with two heat reservoirs: the shearing case}
\author{{\bf Federico Bonetto$^{1,*}$ and Matthew Powell$^1$}\\
\small $^1$ School of Mathematics, Georgia Institute of Technology,Atlanta, GA 30332, USA\\
\small $^*$ corresponding author: \tt bonetto@math.gatech.edu}
\date{\today}

\begin{document}
\maketitle

\begin{abstract}
We study a system formed by $M$ particles moving in 3 dimensions and interacting with two heat reservoirs, each with
$N\gg M$ particles. The system and the reservoirs interact via random collisions and thus evolve via a Kac-type master
equation. The initial state of the reservoirs is given by two non-centered Maxwellian distributions; they have
temperature $T_+$ and $T_-$ and have average velocity $\vec p_+$ and $\vec p_-$, respectively. We prove
that, for times shorter than $\sqrt{N}/M$, the interaction with the two reservoirs is well-approximated by the
interaction with two shearing {\it dynamic} Maxwellian thermostats (i.e. heat reservoirs with $N=\infty$). As a
byproduct of our analysis, we obtain a uniform in time approximation when $T_+=T_-$ and $\vec p_+=\vec p_-$.
\end{abstract}

\begin{center} {\bf Acknowledgement}\\
\small We are indebted to Omar Hurtado, Michael Loss and Tobias Ried for many enlightening discussions.
\end{center}


\section{Introduction, Model and Results}

The Kac model describes particles interacting via random binary collisions meant to give a simplified picture of what
happens in a real gas. The model was introduced in 1953 by Mark Kac\cite{Kac}; the hope was to use the model to prove
the existence of solutions for the Boltzmann equation. Although this has not been realized, the model has proven very
useful in several different respects. In kinetic theory it helped reveal how systems approach equilibrium and the 
relation between the different norms one can use to measure such an approach, see \cite{CCL1,CCL2} for convergence in 
$L^2$, \cite{PS18} for convergence in $L^1$ and \cite{Hagop} for the GTW metric $d_2$. It also helped clarify the 
conditions for the emergence of macroscopic behaviors like propagation of chaos and the validity of the Boltzmann-Kac 
equation, see e.g. \cite{MischlerMouhot1,MischlerMouhot2,Sznitman84}. Finally, the ideas behind the Kac 
evolution are partially present in the DSMC scheme for numerical simulations, see \cite{B94}. 

In a series of papers (see \cite{BLV,BLTV,BGLR,BLP25}) we have tried to extend the reach of the Kac model by adding
interactions both with reservoirs (still modeled via random collisions) and Maxwellian thermostats (seen as infinite
reservoirs), and by looking at particles moving in three spacial dimensions, instead of one as in the original model. In
particular, in \cite{BLP25} we looked at a three dimensional system with $M$ particles interacting with two large heat
reservoirs with $N$ particles each. The reservoirs are initially in equilibrium at temperatures $T_+$ and $T_-$, and we
assumed that the centers of mass of both the system and the reservoirs are initially at rest. We proved that the system
+ 2 reservoirs, when $N>>M$, that is the reservoirs are much larger than the system, is well approximated by a system of
$M$ particles interacting with 2 Maxwellian thermostats. This means that we can formally take the limit for $N\to\infty$
and replace the collisions of the particle in the system with particles in the reservoirs with collisions with {\it
virtual} particles randomly selected from Maxwellian distributions with temperature $T_+$ or $T_-$. Contrary to the case
of only one reservoir, see \cite{BLTV}, our results in \cite{BLP25} were only valid for a time shorter than
$\sqrt{N}/M$. Observe that, by taking the limit $N\to\infty$, we eliminate the influence of the system on the
reservoirs. This leads to the notion of a thermostat with fixed temperature which was analyzed in \cite{BLTV}. On the
other hand, for any fixed $N$, over a time of the order of $N/M$ the temperatures of the reservoir and system evolve.
That is, no matter how large (but finite) the reservoirs are, there will be temperature evolution. Thus the
approximation with thermostats at fixed temperature can only be valid for a relatively short time.

In this paper, we continue this line of study by considering a generalization of the model considered in \cite{BLP25} in
which the centers of mass of the two reservoirs initially have non-zero velocities $\vec p_+$ and $\vec p_-$, in
addition to having different temperatures. If these velocities are different, they will create a shear in the system. On
a macroscopic level, the energy stored in the center of mass motion of the reservoirs is converted into heat so that, in
this case, the final temperature of the reservoirs is higher than the average of the initial temperatures. It is
remarkable that one can write closed evolution equations for the momentum of the centers of mass and the temperatures of
the system and reservoirs, see \eqref{eq:IVPM1} and \eqref{eq:IVPT} below. As discussed in Subsection
\ref{subsec:hidro}, two time scales emerge. On a short time scale of order 1 the system reaches a quasi stationary state
while on a longer time scale of order $N/M$ the temperatures and center of mass velocities of the two reservoirs tend to
become equal and the all system + 2 reservoirs converges to equilibrium.

We can thus attempt to take the limit $N\to\infty$ in a more careful way. To create our effective evolution we assume
that the particles in the system interact with 2 infinite reservoirs whose center of mass velocities and temperatures
evolve in the same way as the corresponding quantities for the reservoirs with finite $N$. This give rise to the notion
of {\it dynamic} thermostat as introduced in Subsection \ref{subsec:dynamic} below. More precisely, we compare the
evolution of the system + 2 reservoir with that of a system interacting with 2 Maxwellian thermostat with center of mass
velocities and temperatures, which evolves in time. We believe that the notion of dynamic thermostat is the major
conceptual novelty of this paper and that thermostats of this type will be instrumental in extending our results beyond
the time range obtained here.

Our main result is contained in Theorem \ref{thm:2thermo} below. Similarly to \cite{BLP25}, we obtain that the system +
2 reservoirs is well approximated by the system + 2 dynamic thermostats for times shorter than $\sqrt N/M$. On a
technical level, as in our previous works, we use the GTW $d_2$ metric (see \cite{GTW}) to measure the distance between
the reservoir and thermostat evolution. Observe that the $d_2$ metric requires the the distributions being compared have
the same first moment. However, as already noticed, the first moment of the distributions of the system and/or the
reservoirs evolve in time so that if the first moment of a thermostat remained fixed the $d_2$-distance between the
reservoir evolution and a fixed thermostat evolution would diverge immediately. This makes the introduction of the
dynamic thermostat necessary also at a technical level, at least as far as the center of mass velocities are involved.
To prove our main theorem we follow the strategy of \cite{BLP25} based on a Duhamel expansion and on the functional
inequality introduced in \cite[Lemma 3.3]{BLTV}. After this, we are left with estimating the evolution of the
thermostated system. Due to the presence of the dynamic thermostats, this last step is considerable more involved than
the corresponding estimates in \cite{BLP25}. In this sense, Subsection \ref{subsec:1part} is the novel technical
achievement of this paper. It also makes clear that the main obstruction to a result uniform in time is the variation of
the state of the reservoirs/thermostats over the long time scale.

We also observe that if we suppose that the two reservoirs have equal temperatures  ($T_+=T_-$) and initial velocities
($\vec p_+=\vec p_-$), then the system is essentially equivalent to a system interacting with one shearing reservoir. In
this case there is no long time scale and both the shear and the temperature of the reservoir change very little along
the evolution. Thus we are again able to obtain estimates uniform in time, see Corollary \ref{coro:1thermo}. We note
that the system involving only one thermostat can be analyzed using a suitable $L^2$ norm, see \cite{BLV,BPP}. Although
the analysis is simpler, the $L^2$ norm, as discussed for example in \cite{BLV}, is not well suited for systems with
many degrees of freedom. Moreover, it is unclear how to extend these $L^2$ results to the case of a system interacting
with more than one reservoir, even for a short time.

In Subsections \ref{subsec:Kac} and \ref{subsec:res} we introduce the Kac evolution for the system and the reservoirs. 
These are included for the sake of completeness but can be skipped by the reader familiar with this family of models. 
In Subsection \ref{subsec:dynamic} we introduce the notion of dynamic thermostat and in Subsection \ref{subsec:main} we 
state our main result. Section \ref{sec:out} contains a few comments on and possible extensions of our results.
Section \ref{sec:proofs} contains the main components of the proof of Theorem \ref{thm:2thermo}. Finally, the appendices
deal with a few technical statements used in the proof.


\subsection{The Kac dynamics} \label{subsec:Kac}

The basic ingredient for the Kac model evolution is the operator which describes the effect of a collision between two
particles. Suppose that the particles are spherical and that the unit vector associated with the relative position of
their centers is $\omega$. If $\vec v_1$ and $\vec v_2$, with $\vec v_i=(v_{i,1}, v_{i,2}, v_{i,3})\in\R^3$, are the 
outgoing velocities immediately after the collision, then the incoming velocities before the
collision were 
\begin{equation}\label{eq:coll}
\vec v_1^*(\omega)=\vec v_1-\scal{(\vec v_1-\vec v_2),\omega}\omega\,,\qquad \vec v_2^*(\omega)=\vec
v_2-\scal{(\vec v_2-\vec v_1),\omega}\omega\, ,
\end{equation}
where $\scal{\vv,\vw}$ denotes the usual inner product in $\mathds R^3$. In the following, we will use $\langle \cdot,
\cdot \rangle$ to denote the usual inner product on any $\R^K$. The precise space under
consideration will be clear from the context. Suppose now that before the collision the velocities of the two particles
were distributed according to the probability distribution $g(\vec v_1,\vec v_2)$ on $\R^6$. If the unit vector $\omega$
is chosen randomly and uniformly on the unit sphere $\mathds S^2$ then the effect of the collision on $g$ is given by %
\begin{equation}\label{eq:defR}
	R[g](\vec v_1,\vec v_2)=\int_{\mathds S^2}g(\vec v_1^*(\omega),\vec v_2^*(\omega))d\omega\, .
\end{equation}
With this collision in hand, we can now start to describe the evolution in the models we will consider. 

We consider a system of $M$ particles moving in $\R^3$ with velocities $\uv=(\vv_1,\ldots,\vv_M)\in\R^{3M}$ where
$\vv_i\in\R^3$. Since the collisions occur randomly and are independent of the positions of the particles, the initial
state of the system is described by a probability distribution $f_0(\uv)$ on $\R^{3M}$. We also assume that the
particles are identical, so that $f_0$ is invariant under particle permutations. 

Contrary to our previous works, see e.g. \cite{BLP25, BLTV}, we assume that the center of mass of our system initially
moves with average velocity $\vec p_S$ given by %
\[
\vec p_S:=\frac1N\sum_i\int \vv_if_0(\uv) \ d\uv = \int \vv_1 f_0(\uv) \ d\uv \, .
\]
Since the collision described in \eqref{eq:coll} preserves the total momentum of the two particles, this assumption will
become relevant only after we introduce the reservoirs, see Subsection \ref{subsec:res} below. For the sake of notation 
we will write %
\begin{equation}\label{eq:centered}
f(\uv)=\tilde f(\vv_1-\vec p_S,\ldots,\vv_M-\vec p_S)=\tilde f(\uv-\underline p_S)
\end{equation}
where we set $\underline p_S=(\vec p_S,\ldots,\vec p_S )$. Finally we assume that the initial temperature $T_S$ of the 
system, or better the average kinetic energy per degree of freedom w.r.t. the center of mass, is finite, that is
\[
T_S:=\frac13\int_{\R^{3M}}\|\vv_1-\vec p_S\|^2 f(\uv)d\uv<\infty
\]

The effect of a collision between particle $i$ and $j$ on any distribution $f$ is described by the operator $R$
in \eqref{eq:defR} acting on $\vec v_i$ and $\vec v_j$. That is %
\begin{equation*}
	R_{i,j}^S[f](\underline v)=\int_{\mathds S^2}f(\underline v_{i,j}(\omega))d\sigma(\omega)
\end{equation*}
where 
\begin{equation*}
\underline v_{i,j}(\omega)=(\vec v_1,\ldots,\vec v_i^*(\omega),\ldots,\vec
v_j^*(\omega),\ldots,\vec v_M)\,,
\end{equation*}
with $\vec v_i^*(\omega)$ and $v_j^*(\omega)$ given by \eqref{eq:coll}.

We assume, as in Kac's original work \cite{Kac}, that the collisions take place according to a Poisson process of
intensity $\lambda_M$ and that when a collision takes place, the pair of colliding particles is chosen randomly and
uniformly. Choosing $\lambda_M=M\lambda_S/2$ we get that the infinitesimal generator for the evolution of a
system with distribution $f$ is %
\begin{equation}\label{eq:genLS}
\mathcal L_S[f]=\frac{\lambda_S}{M-1}\sum_{i<j} (R^S_{i,j}[f]-f).
\end{equation}
With the above choice of $\lambda_M$ the average number of collisions that a particle suffers in a unit time is equal to
$\lambda_S$ independently of $M$. $\lambda_S^{-1}$ is called the {\it mean free flight} for the system. 

\subsection{Shearing Reservoirs}\label{subsec:res}

We consider the situation when a system  with $M$ particles interacts with two much larger reservoirs each containing
$N$ particles with $N \gg M$. We will assume that the two reservoirs are initially in canonical equilibrium at
temperature $T_+$ and $T_-$ and that their centers of mass move with average velocities $\vec p_+$ and $\vec p_-$. We
call $\uu_+=(\vu_{+,1},\ldots, \vu_{+,N})\in \R^{3N}$ the velocities of the particles in the $+$ reservoir and similarly
$\uu_-=(\vu_{-,1},\ldots, \vu_{-,N})\in \R^{3N}$ for the particles in the $-$ reservoir. To shorten notation we will 
also use $\vu_i=(\vu_{+,i},\vu_{-,i})\in \R^6$ and $\uu=(\vu_{1},\ldots, \vu_{N})\in \R^{6N}$ for the combined 
variables of both reservoirs.

The state of the combined
system + 2 reservoirs is now given by a distribution $F(\uu,\uv)$ on $\R^{3(M+2N)}$ and the initial state is of
the form %
\begin{equation}\label{eq:ini2}
F_0(\uv,\uu)=\Gamma^N_+(\uu_+-\underline p_+)f_0(\uv)\Gamma^N_-(\uu_--\underline p_-)
\end{equation}
where, for $\sigma\in\{+,-\}$, we set $\ul p_\sigma=(\vec p_\sigma,\ldots, \vec p_\sigma)$ and
\begin{equation*}
\Gamma^N_\sigma(\uu_\sigma)=\Gamma_{T_\sigma}^{N}(\uu_\sigma)=\prod_{i=1}^{N} \Gamma_{T_\sigma}(\vu_{\sigma,i})
\end{equation*}
with
\begin{equation*}
\Gamma_{T}(\vv)=\left(2\pi T\right)^{-\frac32}e^{-\frac {\|\vv\|^2}{2T}}\, .
\end{equation*}
We can write the generator of the combined system + 2 reservoirs evolution as
\begin{equation}\label{eq:gen2res}
\mathcal L=\mathcal L_S+\mathcal L_{R_+}+\mathcal L_{R_-}+\mathcal L_{I_+}+\mathcal L_{I_-}
\end{equation}
where $\mathcal L_S$ still describes the internal evolution of the system, see \eqref{eq:genLS}, while $\mathcal
L_{R_\sigma}$ and $\mathcal L_{I_\sigma}$, with $\sigma\in\{+,-\}$, describe the internal evolution of the $\sigma$ reservoir and
the interaction of the $\sigma$ reservoir with the system, respectively. This means that
\begin{equation*}
\mathcal L_{R_\sigma}[F]=\frac{\lambda_{R}}{N-1}\sum_{i<j} (R_{i,j}^{R_\sigma}[F]-F)
\end{equation*}
with $R_{i,j}^{R_\sigma}$ describing the effect of a collision in the $\sigma$ reservoir between particle $i$ with 
velocity $\v
u_{\sigma,i}$ and particle $j$ with velocity $\v u_{\sigma,j}$ via the operator \eqref{eq:defR} acting on $\v
u_{\sigma,i}$ and $\v u_{\sigma,j}$. Finally $\mathcal L_{I_\sigma}$ describes the interaction between
the $\sigma$ reservoir and the system. It is also modeled via a Kac-style collision and we set
\begin{equation}\label{eq:genLI}
\mathcal L_{I_\sigma}[F]=\frac \mu N\sum_{i=1}^N\sum_{j=1}^M (R_{i,j}^{I_\sigma}[F]-F)
\end{equation}
where $R_{i,j}^{I_\sigma}$ describes the effect of a collision between particle $i$ in the $\sigma$ reservoir with 
velocity $\v
u_{\sigma,i}$ and particle $j$ in the system with velocity $\v v_j$. It is again given by the operator \eqref{eq:defR}
acting on $\v u_{\sigma,i}$ and $\vec v_j$. The factor $\mu/N$ in \eqref{eq:genLI} is chosen so that a particle in the
system suffers, on average, $\mu$ collisions per unit time with particles in the reservoir. At the same time, a particle
in the reservoir on average suffers $\mu M/N$ collision per unit time with particles in the system. Observe that there
is no direct interaction between the two reservoirs. 

The state of the system + 2 reservoirs at time $t$ is thus given by
\begin{equation}\label{eq:evolres}
F_t=e^{\cL t}F_0\, .
\end{equation}
For every $N$ and $M$, $\cL$ is a bounded operator on $L^1(\R^{3M+2N})$ so that the evolution \eqref{eq:evolres} is 
well defined for every $t$. Finally,
for simplicity of notation, we choose units of time such that $\mu=1$. 


\subsection{Dynamic Shearing Thermostats}\label{subsec:dynamic}

In the spirit of \cite{BLP25}, we want to obtain an effective approximation for the evolution of the system + 2
reservoirs considered in the previous subsection when $N$ is very large. To do this we first need a better understanding
of its macroscopic evolution. We can define the average velocities of the centers of mass at time $t$ as
\begin{equation*}
\begin{aligned}
\vec m_{S}(t) &:= \frac 1M\sum_{i=1}^N\int \vv_i F_t( \uv, \uu) \  d\uv d\uu =\int \vv_1 F_t( 
\uv, \uu) \  d\uv d\uu\\
\vec m_{\sigma}(t) &:= \frac 1N\sum_{i=1}^N\int \vu_{\sigma,i} F_t(\uv, \uu) d\uv d\uu = \int 
\vu_{\sigma,1} F_t(\uv, \uu) \  d\uv d\uu\,. 
\end{aligned}
\end{equation*}
For simplicity sake, we will call $\vec m_S$ the {\it velocity of the system} and similarly for $\vec m_\sigma$.  
Form the properties of the generator $\cL$, see \eqref{eq:gen2res}, it is not hard to derive explicit closed equations 
for the evolution of $\vec m_{S}$ and $\vec m_{\sigma}$, see \cite{BLP25} for more details. Indeed we get the Initial 
Value Problem
\begin{equation}\label{eq:IVPM1}
\frac{d}{dt}\begin{pmatrix}
		\vec m_+ \\ \vec m_S \\ \vec m_-
	\end{pmatrix}=
\frac 13\begin{pmatrix}
-\frac{M}{N} & \frac MN & 0\\
1 & -2 & 1\\
0 & \frac{M}{N} & -\frac MN
\end{pmatrix}
\begin{pmatrix}
		\vec m_+ \\ \vec m_S \\ \vec m_-
	\end{pmatrix},
\qquad\qquad
\begin{pmatrix}
		\vec m_+(0) \\ \vec m_S(0) \\ \vec m_-(0)
	\end{pmatrix}=
\begin{pmatrix}
		\vec p_+ \\ \vec p_S \\ \vec p_-
	\end{pmatrix}
\end{equation}
To simplify the coming discussion we will chose a reference frame in which the combined center of mass of the system + 2
reservoirs is initially at rest, that is $N\vec p_++N\vec p_-+M\vec p_S=0$. This condition is clearly preserved in time.

Similarly, the temperatures, or better the average kinetic energies per degree of freedom w.r.t. the centers of mass, at
time $t$ can be defined as
\begin{equation*}
\begin{aligned}
\tau_S(t):=&\frac 13\int \|\vv_1-\vec m_S(t)\|^2 F_t(\uv, \uu) \ d\uv d\uu\\
\tau_\sigma(t):=&\frac 13\int \|\vu_{\sigma,1}-\vec m_\sigma(t)\|^2 F_t( \uv, \uu) \  d\uv d\uu\, .
\end{aligned}
\end{equation*}
and they satisfy the IVP
\begin{equation}\label{eq:IVPT}
\frac{d}{dt}\begin{pmatrix}
		\tau_+ \\ \tau_S \\\tau _-
	\end{pmatrix}=
\frac 13\begin{pmatrix}
-\frac{M}{N} & \frac MN & 0\\
1 & -2 & 1\\
0 & \frac{M}{N} & -\frac MN
\end{pmatrix}
\begin{pmatrix}
		\tau_+ \\ \tau_S \\\tau _-
\end{pmatrix} + 
\frac 19\begin{pmatrix}
\frac{M}{N} \|\vec m_+-\vec m_S\|^2\\
\|\vec m_+-\vec m_S\|^2+ \|\vec 
m_S-\vec m_-\|^2\\
\frac{M}{N} \|\vec m_S-\vec m_-\|^2
\end{pmatrix},\qquad
\begin{pmatrix}
		\tau_+(0) \\ \tau_S(0) \\ \tau _-(0)
	\end{pmatrix}=
\begin{pmatrix}
		T_+ \\ T_S \\ T _-
	\end{pmatrix}
\end{equation}
Let $\vec m_\sigma(t)$ and $\vec m_S(t)$ be the solutions of \eqref{eq:IVPM1} and let $\underline m_S(t) = (\vec
m_S(t),...,\vec m_S(t)) \in \R^{3M}$ and $\underline m_\sigma(t) = (\vec m_\sigma(t),...,\vec m_\sigma(t)) \in \R^{3N}.$
Similarly let $\tau_S(t)$ and $\tau_\sigma(t)$ be the solution of \eqref{eq:IVPT}. In Subsection \ref{subsec:hidro}
below we will give a more detailed discussion of these equations, their explicit solutions, and their physical meaning.
We observe here that
\begin{equation}\label{eq:infty}
\begin{aligned}
\lim_{t\to\infty}\vec 
m_S(t)&=\lim_{t\to\infty}\vec m_\pm(t)=0\\
\lim_{t\to\infty}\tau_S(t)&=\lim_{t\to\infty}\tau_\pm(t)=\frac{N(T_++T_-)+MT_S}{N^*}+\frac{N\|\vec m_+-\vec m_S\| + 
N\|\vec m_+-\vec m_S\|}{3N^*} =:\tau_\infty. 
\end{aligned}
\end{equation}
where $N^* = 2N + M$ is the total number of particles. This means that the system +2 reservoirs eventually reaches an 
equilibrium state where the the system and the 2 reservoirs are at rest and have the same temperature.

We are now ready to introduce our effective evolution. The following heuristic discussion will lead us to the evolution 
equation \eqref{eq:evolther} for the system interacting with 2 {\it dynamic} thermostats. The validity of the 
approximations discussed below will be proved {\it a posteriori} in Theorem \ref{thm:2thermo}, at least for a not too 
long time

From \eqref{eq:IVPM1} and \eqref{eq:IVPT}, see also Subsection \ref{subsec:hidro}, we see that $\vec m_\sigma(t)$ and
$\tau_\sigma(t)$ vary slowly in time with a rate of the order of $M/N$.  On the other hand, the Kac collisions between
particle in the same reservoir tend to keep the state of the reservoir close to a distribution that is rotational
invariant in the direction orthogonal to the total momentum. Finally, since $N$ is very large and the initial state of
each reservoir is a product state, we expect that the Kac collision inside the reservoirs will tend to push the state of
the reservoirs toward distributions which are chaotic, in the sense that they remain close to product states. See 
\cite{Kac} for more details on chaotic distributions.

From the above discussion, we expect that %
\begin{equation}\label{eq:appevo}
F_t(\uv,\uu)\simeq G_{+,t}^N(\uu_+)f_t(\uv)G_{-,t}^N(\uu_-)
\end{equation}
where for $\sigma\in\{+,-\}$ we set
\begin{equation}\label{eq:Gsigma}
G^N_{\sigma,t}(\uu_\sigma):=\prod_{i=1}^{N}G_{\sigma,t}(\uu_{\sigma,i})\qquad\hbox{with}\qquad 
G_{\sigma,t}(\uu_{\sigma,i}):= 
\Gamma_{\tau_\sigma(t)}(\vu_{\sigma,i}-\vec m_\sigma(t))
\end{equation}
with $\vec m_S(t)$, $\vec m_\sigma(t)$ solution of \eqref{eq:IVPM1} and $\tau_S(t)$ and $\tau_\sigma(t)$ solutions of
\eqref{eq:IVPT}. That is, we think that the reservoirs evolve slowly in time and remain always close to the canonical
state defined by the solution of \eqref{eq:IVPM1} and \eqref{eq:IVPT}.

In light of this, we are led to conjecture that the evolution of the system can be effectively described via the
interaction with two dynamic thermostats. That is, if the state of the system at time $t$ is $f_t$, we consider the
state
\[
\widetilde F_t( \uv, \uu) := G_{+,t}^N(\uu_+)f_t(\uv)G_{-,t}^N(\uu_-)
\]
formed by the system plus the two reservoirs in their effective state. We then compute the time variation of $f_t$ by 
applying the generator $\cL$, see \eqref{eq:gen2res}, and take the marginal over the variables of the reservoirs. 
That is, the evolution equation for $f_t$ becomes:
\begin{equation}\label{eq:ther}
\frac{d}{dt} f_t(\uv) = \int \mathcal L [\widetilde F_t](\uv, \uu) \ d\uu\, .
\end{equation}
This means that the system feels the influence of two reservoirs whose state at time $t$ is prescribed a priori by 
\eqref{eq:Gsigma} and thus is not influenced by the state of the system. To be more precise, we define the 
one-parameter family of 
operators
\begin{equation*}
\Psi(t)[f]  = \int \mathcal L[G_{+,t}^N f G_{-,t}^N] (\uu_+, \uv, \uu_-) \ d\uu_+ d\uu_+
\end{equation*}
and the two-parameter family $\Phi_{t,s}$, $t \geq s$, that satisfies
\begin{equation}\label{eq:2times}
\frac{d}{dt} \Phi_{t,s} = \Psi(t) \Phi_{t,s},\qquad\Phi_{t,t}=\mathrm{Id}
\end{equation}
Thus the solution of \eqref{eq:ther} can be written as
\begin{equation}\label{eq:evolther}
f_t=\Phi_{t,0}f_0\, 
\end{equation}
and hence the evolution of the system + 2 thermostats is given by
\begin{equation*}
\widetilde F_t(\underline v, \underline u) = G_{+,t}^N(\underline u_+)\Phi_{t,0} [f_0](\underline v)G_{-,t}^N(\underline u_-)
\end{equation*}
As for $\cL$, it is not hard to see that $\Psi(t)$ is a bounded operator for every $t$ so that $\Phi_{t,s}$ is well 
defined for every $s\leq t$. Moreover we can define
\begin{equation}\label{eq:tildem}
\vec{\tilde m}_S(t):=\int \vec v_{1}  \widetilde F_t(\uv,\uu) d\uv d\uu\qquad\hbox{and}\qquad
\tilde\tau_S(t):=\frac13\int \|\vec v_{1}-\vec {\tilde m}_S(t)\|^2 \widetilde F_t(\uv,\uu) d\uv d\uu
\end{equation}
and similar definitions for $\vec{\tilde m}_\sigma(t)$ and $\tilde \tau_\sigma(t)$. Notice that $\vec{\tilde m}_S$ and
$\tilde\tau_S$ satisfy the same differential equation as $\vec{m}_S$ and $\tau_S$ in \eqref{eq:IVPM1} and
\eqref{eq:IVPT} with the same initial conditions while $\vec{\tilde m}_\sigma(t)$ and $\tilde \tau_\sigma(t)$ are
clearly identical with $\vec{m}_\sigma(t)$ and $\tau_\sigma(t)$. It thus follow that
\[
\vec{\tilde m}_S(t)=\vec{m}_S(t)\qquad\hbox{and}\qquad\tilde\tau_S(t)=\tau_S(t)
\]
for every $t$, that is the macroscopic evolution of the system + 2 thermostats coincides with that of the system + 2 
reservoirs. Observe finally that, due to the product nature of $G^N_{\sigma,t}$, $\Psi(t)$ does not depend on $N$ and 
it can be written in a form very similar to the {\it static} thermostat studied in \cite{BLV,BLTV,BLP25}, see 
Subsection \ref{subsec:statdyn} for more details.

In the case in which $T_+=T_-=:T$ and $\vec p_+=\vec p_-=:\vec p$, we can imagine that the two reservoirs constitute a
single reservoir with $2N$ particles at temperature $T$ and velocity $\vec p$. Moreover, if we fix the initial velocity
of the system + reservoir to be 0 we get $\vec p=-\frac{M}{2N}\vec p_S$. Thus, if $N>>M$, we can imagine that the
reservoir is effectively at rest and interacts with a system that is initially in motion with velocity $\vec p_S$. It is
easy to see from \eqref{eq:IVPM1} and \eqref{eq:IVPT} that in this situation the temperature and the velocity of the
reservoir vary very little along the evolution. For these reasons, in this particular case, we will be able to obtain
estimates that are uniform in time, see Corollary \ref{coro:1thermo} below.

\subsection{Main Result: comparing $F_t$ and $\widetilde F_t$}\label{subsec:main}

We want to compare the system + 2 reservoir evolution $F_t$ given by \eqref{eq:evolres} with the system + two thermostat
evolution $\widetilde F_t$ given by \eqref{eq:appevo} with $f_t$ defined by \eqref{eq:evolther}. As in \cite{BLP25}, we
will compare $F_t$ and $\widetilde F_t$ using the GTW $d_2$ distance, see \cite{GTW}, that we briefly introduce and 
discuss below. 

For $f:\R^{K}\mapsto \R$ we write its Fourier transform as
\begin{equation*}
\hat f(\underline \xi)=\int_{\R^K}e^{2\pi i \scal{\underline \xi,\underline v}}f(\uv)d\uv.
\end{equation*}
Given two distributions $f,g:\R^{K}\mapsto \R$ with
\begin{equation}\label{eq:cond}
\int_{\R^K}f(\uv)d\uv=\int_{\R^K}g(\uv)d\uv=1
\qquad\hbox{and}\qquad 
\int_{\R^K}\uv f(\uv)d\uv=\int_{\R^K}\uv g(\uv)d\uv
\end{equation}
we define the $d_2$ distance between $f$ and $g$ as
\begin{equation}\label{eq:d2}
d_2(f,g)=\sup_{\uxi\not=0}\frac{|\hat f(\uxi)- \hat g(\uxi)|}{\|\uxi\|^2}.
\end{equation}
Notice that the two conditions in \eqref{eq:cond} are necessary for the supremum in \eqref{eq:d2} to be finite. On the 
other hand if we further assume that
\[
\int_{\R^K}\|\uv\|^2 f(\uv)d\uv,\int_{\R^K}\|\uv\|^2 g(\uv)d\uv<\infty
\]
then we get that $d_2(f,g)<\infty$. Finally we observe that the $d_2$ distance is unchanged if we change the average
velocity. That is, for any $\underline m\in \R^K$ we have %
\begin{equation}\label{eq:shift}
d_2(f(\cdot-\underline m), g(\cdot-\underline m))=d_2(f,g)\, .
\end{equation}
Thus the choice of a particular reference frame, as discussed after \eqref{eq:IVPM1}, has no effect on our results.

Like our result in \cite{BLP25}, in the case of two thermostat we are unable to obtain a uniform estimate in $t$ due to
the fact that the states of the reservoirs change in time. Notwithstanding the temperatures and velocities of our
dynamic thermostats exactly follow the temperatures and velocities of the reservoirs,  this is apparently not enough.
See Subsection \ref{subsec:hidro} for a more detailed discussion of what we think is needed to obtain a uniform
estimate.

We formulate our results in terms of the distance $d_2(\tilde f_0,\Gamma^M_{\tau_\infty})$ between the recentered
initial distribution $\tilde f_0$, see \eqref{eq:centered}, and the Maxwellian at equilibrium temperature $\tau_\infty$,
together with the two quantities that characterize how far from equilibrium the reservoirs are at time $t=0$:
\begin{equation}\label{eq:defp}
\Delta_p=\|\vec p_+-\vec p_S\|^2+\|\vec p_--\vec p_S\|^2\qquad\hbox{and}\qquad
\Delta_T=|T_+-T_S|+|T_--T_S|\, .
\end{equation}
We are now ready for our main Theorem.

\begin{mythm}\label{thm:2thermo}
Let $f_0$ be a distribution on $\R^{3M}$, $T_+,T_-\in\R$ and $\vec p_+, \vec p_-\in \R^3$ and consider the initial state
$F_0$ given by \eqref{eq:ini2} and its evolution $F_t$ given by \eqref{eq:evolres}. Consider also the evolution of the
initial state $F_0$ given by $\widetilde F_t=G_{+,t}^N\Phi_{t,0} [f_0]G_{-,t}^N$ with $\Phi_{t,0}$ defined in
\eqref{eq:2times}. Then we have

\begin{equation}\label{eq:mainth}
d_2(F_t, \widetilde F_t) \leq C(f_0)\frac M{\sqrt N}\left(d_2(\tilde f_0, \Gamma^M_{\tau_\infty})^{\frac 
16}\left(1-e^{-\frac t{18}}\right)+(\Delta_p^2+\Delta_T)^{\frac 16}t\right)
\end{equation}
for all $t\geq 0$ and a suitable constant $C(f_0)$.
\end{mythm}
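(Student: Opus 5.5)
We follow the strategy of \cite{BLP25}, which is based on a Duhamel expansion. Write the reservoir generator as $\cL$ and the thermostat evolution as $\widetilde F_t=G^N_{+,t}\Phi_{t,0}[f_0]G^N_{-,t}$. Since $\widetilde F_t$ is not generated by $\cL$, we first compute $\partial_t\widetilde F_t-\cL\widetilde F_t=:\mathcal E_t$. Then Duhamel gives
\[
F_t-\widetilde F_t=-\int_0^t e^{\cL(t-s)}\mathcal E_s\,ds .
\]
The error $\mathcal E_s$ contains two kinds of terms.

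The first kind comes from $\cL_{I_\sigma}$ acting on the reservoir variables. This is the part of $\cL_{I_\sigma}\widetilde F_s$ that is not captured by $\Psi(s)$, namely the back-reaction of the system on the reservoirs, minus the explicit time derivative of $G^N_{\sigma,s}$, which is driven by the $M/N$ terms in \eqref{eq:IVPM1}--\eqref{eq:IVPT}. The second kind is $\cL_{R_\sigma}\widetilde F_s$, which vanishes because $G_{\sigma,s}$ is a Maxwellian and therefore invariant under the internal reservoir collisions.

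By construction the first and second moments of $\widetilde F_s$ match those of $F_s$. Hence $\mathcal E_s$ has vanishing zeroth and first moments, so the $d_2$-type quantity $\sup_{\uxi}|\widehat{\mathcal E_s}(\uxi)|/\|\uxi\|^2$ is finite. Since $e^{\cL t}$ is a $d_2$ contraction (a Kac collision is an average of orthogonal maps in Fourier space), we get
\[
d_2(F_t,\widetilde F_t)\le \int_0^t \sup_{\uxi}\frac{|\widehat{\mathcal E_s}(\uxi)|}{\|\uxi\|^2}\,ds .
\]

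To bound $\widehat{\mathcal E_s}$ we use the functional inequality \cite[Lemma 3.3]{BLTV}, as in \cite{BLP25}. Each term of $\cL_{I_\sigma}$ carries a factor $1/N$, and summing over the $N$ reservoir particles and $M$ system particles with Cauchy--Schwarz in $\ueta$ produces the factor $M/\sqrt N$. After this reduction, the bound for $\mathcal E_s$ is controlled by how far the one-particle marginal of $f_s=\Phi_{s,0}f_0$, recentered at $\vec m_S(s)$, is from the Maxwellian $\Gamma_{\tau_S(s)}$ in $d_2$. More precisely, it is controlled by a fractional power of that distance, since Lemma 3.3 interpolates between $d_2$ and a moment bound; this is where the exponent $1/6$ and the constant $C(f_0)$, which depends on higher moments of $f_0$, come from.

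The key step, and the expected main obstacle, is a one-particle estimate for the dynamic thermostat evolution. Concretely, we want a bound of the form
\[
d_2\bigl(f^{(1)}_s(\cdot+\vec m_S(s)),\Gamma_{\tau_S(s)}\bigr)\lesssim e^{-s/18}d_2(\tilde f_0,\Gamma^M_{\tau_\infty})+(\Delta_p^2+\Delta_T)\ \hbox{(uniform in } s).
\]
We would obtain it as follows. Write $\Psi(s)$ in Fourier space as a static thermostat operator with parameters $\vec m_\pm(s)$ and $\tau_\pm(s)$. Compare $\Phi_{s,0}$ with the contraction generated by a frozen thermostat: at rate $1/3$ per degree of freedom the collisions with thermostats contract toward the instantaneous Maxwellian. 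The drift of the parameters is controlled using the explicit solution of \eqref{eq:IVPM1}--\eqref{eq:IVPT}. On the fast time scale, $\vec m_S-\vec m_\pm$ and $\tau_S-\tau_\pm$ relax exponentially, with sizes $\Delta_p$ and $\Delta_T+\Delta_p^2$. On the slow scale they are $O(M/N)$ times these quantities. The recentering by $\vec m_S(s)$ is essential, because otherwise the first moments would mismatch and the $d_2$ distance would be infinite.

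A second Duhamel argument in $\uxi$-space, with the time-dependent Maxwellian as the reference state, should give the exponential decay factor. Integrating this in $s$ produces $(1-e^{-t/18})$, while the parameter drift contributes an $s$-independent source term whose time integral produces the linear term $(\Delta_p^2+\Delta_T)^{1/6}t$. The delicate part is making the contraction rate uniform while the reference Maxwellian moves. We would handle it by differentiating $\hat f_s$ against $\hat\Gamma_{\tau_S(s)}(\xi)e^{2\pi i\langle\xi,\vec m_S(s)\rangle}$, and by absorbing the derivative terms $\dot\tau_S$ and $\dot{\vec m}_S$ using $|\hat\Gamma|\,\|\xi\|^2$-type bounds.
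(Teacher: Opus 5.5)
Your outer architecture matches the paper's. It has the Duhamel expansion against $e^{\mathcal L(t-s)}$, $d_2$-contractivity of the reservoir dynamics, and Lemma~\ref{lem:prop5new} giving $\sqrt N\,\|H_s\|_{3,1}^{5/6}\mathcal D_1^{1/6}$. It then uses contraction of $\Phi_{t,s}$ plus a second Duhamel comparison with a Maxwellian recentred at $\vec m_S(s)$. The problem is the step where you reduce $\mathcal D_1$ to a property of $f_s$, which is wrong as stated, for two reasons.

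First, $H_s[\hat f_s](\uxi,\veta)$ involves $\hat f_s$ at arbitrary $\uxi\in\R^{3M}$: each collision term only rotates $\vxi_j$ and leaves the other components in place. A $d_2$ bound on the one-particle marginal therefore does not control it. You need the full $M$-particle distance, as in Proposition~\ref{Prop:H}. That is also the only level at which Proposition~\ref{Prop:Contraction} is available, since the one-particle marginal is not closed under $\mathcal L_S$.

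Second, the error does not vanish when $f_s$ \emph{is} the reference Maxwellian. The system and the reservoirs have different velocities and temperatures, so $H_s[\widehat\Gamma^M_{\infty,s}]\neq 0$. The paper has to bound this separately (term (III), Proposition~\ref{Prop:Theta}) by $CM\sum_\sigma(|\tau_\sigma(s)-\tau_\infty|+\|\vec\delta_\sigma(s)\|^2)$.

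This ties to an incorrect heuristic in your sketch. When $\tau_+\neq\tau_-$ or $\vec m_+\neq\vec m_-$, the frozen $\Psi(s)$ does not relax toward a Maxwellian $\Gamma_\tau(\cdot-\vec m)$; its stationary state is the non-equilibrium steady state of Subsection~\ref{subsec:statdyn}. Nor do the gradients become $O(M/N)$ after the fast transient: $\vec\delta_+-\vec\delta_-$ and $\epsilon_+-\epsilon_-$ decay only like $e^{-Ms/(3N)}$ (Proposition~\ref{Prop:Delta}). The $s$-uniform source behind the linear-in-$t$ term is exactly these persistent shear and temperature gradients. They enter through $d_2\bigl(\Psi(s')\Gamma^M_{\infty,s'},\tfrac{d}{ds'}\Gamma^M_{\infty,s'}\bigr)$ (Proposition~\ref{Prop:tedious}) and through (III). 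The parameter drift you blame is only $O(M/N)$ on the slow scale.

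You also omit an ingredient that Lemma~\ref{lem:prop5new} cannot do without: the bound $\sup_s\|H_s[\hat f_s]\|_{3,1}\le CM\,E_4(f_0)$. This is uniform-in-time control of the fourth moments of $f_s$ under the \emph{time-dependent} thermostat, with an $M$-independent constant per particle. It does not follow from the static-thermostat case. The paper gets it in Appendix~\ref{app:moments} from the adjoint evolution on polynomials of degree $\le 4$, which is block upper triangular with $t$-independent, negative definite diagonal blocks.

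Finally, there is a bookkeeping slip. The $d_2$ contraction rate of $\Phi_{t,s}$ is $1/3$, and $1/18=\frac13\cdot\frac16$ comes from the $1/6$ power in Lemma~\ref{lem:prop5new}. With $e^{-s/18}$ already inside $\mathcal D_1$, you would get $e^{-s/108}$ after taking the power. This is harmless once the constant is enlarged, but it shows the exponent does not arise where you place it.
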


\begin{Remark}
\emph{The constant $C(f_0)$ is \eqref{eq:mainth} does not depend on $N$ or $M$ but it still depends on $f_0$ through its
moments of low degree, including $\vec p_S$ and $T_S$, together with $T_\sigma$ and $\vec p_\sigma$, see Lemma
\ref{lem:prop5new} and Subsection \ref{subsec:1part} below for more details. We chose to explicitly write only the
dependence on quantities that are relevant for the non equilibrium character of our model. }
\end{Remark} 

In the case that the two thermostats have the same initial temperature and the same velocity we can obtain a better 
result.

\begin{corollary}\label{coro:1thermo}
In the hypotheses of Theorem \ref{thm:2thermo}, if $T_+=T_-=:T$ and $\vec p_+=\vec p_-$ we get
\[
d_2(F_t, \widetilde F_t) \leq  C\frac M{\sqrt N}\left(d_2(f_0, \Gamma_T^M)^{\frac 16}+(\Delta_p^2+\Delta_T)^{\frac 
16}\right)\left (1-e^{-\frac t{18}}\right)
\]
for all $t>0$.
\end{corollary}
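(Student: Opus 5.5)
The plan is to rerun the proof of Theorem \ref{thm:2thermo} in the special case $T_+=T_-=T$, $\vec p_+=\vec p_-=:\vec p$. The goal is to replace the term linear in $t$ by a term proportional to $1-e^{-t/18}$. As in the general case, I would write the difference $F_t-\widetilde F_t$ through a Duhamel formula:
\[
F_t-\widetilde F_t=\int_0^t e^{\cL(t-s)}\bigl(\cL-\widetilde\cL(s)\bigr)\widetilde F_s\,ds ,
\]
where $\widetilde\cL(s)$ denotes the effective generator producing $\widetilde F_s$, namely $\Psi(s)$ on the system together with the time derivative of $G^N_{\pm,s}$. Since $e^{\cL t}$ is a $d_2$ contraction, the functional inequality of \cite[Lemma 3.3]{BLTV} bounds each integrand by $\frac{M}{\sqrt N}$ times a quantity $E(s)$. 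This quantity measures how far the one-particle (and two-particle) marginals of $\Phi_{s,0}f_0$, recentered at $\vec m_S(s)$, are from the thermostat Maxwellian $\Gamma_{\tau_\pm(s)}$, plus the mismatch between $\vec m_S(s)$ and $\vec m_\pm(s)$ and between $\tau_S(s)$ and $\tau_\pm(s)$. In the general theorem the bound $E(s)\lesssim e^{-s/18}d_2^{1/6}+(\Delta_p^2+\Delta_T)^{1/6}$ integrates to the stated result. What the corollary requires is that the second piece also decays like $e^{-s/18}$.

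The key input is the explicit solution of \eqref{eq:IVPM1} and \eqref{eq:IVPT} in the symmetric case. By symmetry $\vec m_+(t)=\vec m_-(t)$ and $\tau_+(t)=\tau_-(t)$, so the matrix reduces to its $2\times2$ block. Its nonzero eigenvalue is $-\frac13(2+\frac{2M}{N})$, and no slow mode of order $M/N$ is present. Hence $\vec m_S(t)-\vec m_\pm(t)=(\vec p_S-\vec p)e^{-\frac23(1+M/N)t}$, and similarly $\tau_S-\tau_\pm$ is the initial difference times a fast exponential plus a forcing term of size $\|\vec m_S-\vec m_\pm\|^2$, which decays at least as fast. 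The reservoir parameters themselves move only by $O(\frac MN\Delta_p)$ and $O(\frac MN\Delta_T)$ in total, uniformly in $t$. I would then repeat the one-particle estimates of Subsection \ref{subsec:1part} with these inputs. The source terms there are proportional to $\dot{\vec m}_\pm$, $\dot\tau_\pm$ and the mismatches above, and each is now bounded by $(\Delta_p^2+\Delta_T)^{1/6}e^{-cs}$ with $c\ge 1/18$, after the same $1/6$-power interpolation between $d_2$ and moment bounds used in the theorem. The distance of the recentered marginal from $\Gamma_{\tau_\pm(s)}$ then satisfies a Gronwall inequality with contraction rate at least $1/18$ and exponentially decaying forcing. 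This gives $E(s)\le C\bigl(d_2(f_0,\Gamma^M_T)^{1/6}+(\Delta_p^2+\Delta_T)^{1/6}\bigr)e^{-s/18}$. The reference state $\Gamma^M_{\tau_\infty}$ is replaced by $\Gamma^M_T$ because $|\tau_\infty-T|\lesssim\Delta_T+\Delta_p$, and the difference is absorbed using $d_2(\Gamma_{a},\Gamma_b)\lesssim|a-b|$ and \eqref{eq:shift}. Integrating over $s\in[0,t]$ then yields the factor $18(1-e^{-t/18})$.

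The main obstacle is the Gronwall step with time-dependent thermostats. One must check that the inhomogeneous terms created by the moving target Maxwellian $\Gamma_{\tau_\pm(s)}(\cdot-\vec m_\pm(s))$ are controlled by derivatives of the macroscopic parameters, which decay exponentially in this case, rather than by constants. I would first try to write the equation for $\Phi_{s,0}f_0$ in the frame comoving with $\vec m_\pm(s)$. In that frame the only extra terms are the drift $\dot{\vec m}_\pm\cdot\nabla$ and the temperature shift, each multiplied by a decaying coefficient. Care is also needed so that the rate $1/18$ from the static thermostat spectral gap is not degraded by the perturbation. If it is degraded, the rate only needs adjusting by $O(M/N)$, which the constant absorbs.
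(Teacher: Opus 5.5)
This is essentially the paper's proof, with one step made explicit. In the symmetric case the moment equations reduce to a $2\times2$ system with no slow mode; its nonzero eigenvalue is $-\frac13(2+\frac MN)$, not $-\frac13(2+\frac{2M}{N})$, but the slip is harmless. So the shears, the temperature gaps and $|\tau_\pm-\tau_\infty|$ all decay at an $O(1)$ rate, and reusing the bounds on (I), (II), (III) gives $D_1\lesssim e^{-s/3}$, whose $1/6$ power integrates to $18(1-e^{-t/18})$. Your replacement of $\Gamma^M_{\tau_\infty}$ by $\Gamma^M_T$ via $|\tau_\infty-T|\lesssim\Delta_T+\Delta_p$ supplies a step the paper leaves implicit.

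One correction to your narrative: $1/18$ is not a spectral gap that the moving thermostats could degrade. It is $\frac13\cdot\frac16$, where $\frac13$ is the $d_2$-contraction rate of $\Phi_{t,s}$ from Proposition \ref{Prop:Contraction}, which holds for arbitrary time-dependent thermostat parameters, and $\frac16$ is the exponent from Lemma \ref{lem:prop5new}. So no comoving-frame argument is needed.
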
 

Observe that, even if the two reservoirs start with $T_+=T_-$ but $\vec p_+\not =\vec p_-$, the temperature of the
reservoirs will change with time, see \eqref{eq:IVPM1} and discussion in Subsection \ref{subsec:hidro}. Thus both the
conditions on the temperatures and velocities are necessary for the validity of Corollary \ref{coro:1thermo}.

\section{Discussion and Outlook}\label{sec:out}

We collect here few comments on the meaning of our results and suggestions for possible extensions.

\subsection{Hydrodynamics}\label{subsec:hidro}

It is not hard to see that the solution to \eqref{eq:IVPM1} can be written as:
\begin{equation}\label{eq:IVPsol}
	\begin{pmatrix}
		\vec m_+(t) \\ \vec m_S(t) \\ \vec m_-(t)
	\end{pmatrix}=
\frac{\vec p_+-\vec p_-}2\begin{pmatrix}
	1 \\ 0 \\ -1
\end{pmatrix} e^{-\frac M{3N} t}+
\frac{\vec p_+-2\vec p_S+\vec p_-}{2N^*}\begin{pmatrix}
	M \\ -2N \\ M
\end{pmatrix}e^{-\frac 13\left(2+\frac MN\right)t}
\end{equation}
where we have used our assumption that the total center of mass is initially at rest. Thus, on a time scale of order 1,
$\vec m_S(t)$ goes to 0 while $\vec m_+(t)$ and $\vec m_-(t)$ become equal and opposite. On a longer time scale of order
of $N/M$,  $\vec m_\sigma(t)$ also tends to zero.

Inspection of equations \eqref{eq:IVPM1} and \eqref{eq:IVPT} shows that the loss of kinetic energy by the center of mass
motion of the reservoirs and system is compensated by the positive contribution to the temperature of the non
homogeneous terms in \eqref{eq:IVPT}. Indeed, we can see from \eqref{eq:IVPsol} that $\vec m_S(t)$ approaches its
asymptotic value exponentially fast with a rate of order 1. The kinetic energy-temperature correspondence discussed
above (or, indeed, direct inspection of \eqref{eq:IVPT}) leads us to see that $\tau_S(t)$ also approaches its asymptotic
value exponentially fast with a rate of order 1. Thus, to understand the large time behavior of the solution of
\eqref{eq:IVPT}, we consider the special case in which $\vec p_S=0$ so that $\vec p_+=-\vec p_-$ while
$T_S=(T_++T_-)/2$. In this situation we can write the solution of \eqref{eq:IVPT} as
\begin{align*}
\begin{split}
\tau_+(t)=&T_S+\frac {2N}{3N^*}\|p_+\|^2+\frac{T_+-T_-}2e^{-\frac M{3N} t}+\frac{M}{3N^*}\|\vec p_+\|^2e^{-\frac 
13\left(2+\frac 
MN\right)t}-\frac{\|p_+\|^2}3e^{-\frac {2M}{3N} t}\\
\tau_S(t)=&T_S+\frac {2N}{3N^*}\|p_+\|^2-\frac {2N}{3N^*}\|p_+\|^2e^{-\frac 13\left(2+\frac 
MN\right)t}\\
\tau_-(t)=&T_S+\frac {2N}{3N^*}\|p_+\|^2-\frac{T_+-T_-}2e^{-\frac M{3N} t}+\frac{M}{3N^*}\|\vec p_+\|^2e^{-\frac 
13\left(2+\frac 
MN\right)t}-\frac{\|p_+\|^2}3e^{-\frac {2M}{3N} t}\\
\end{split}
\end{align*}
where we clearly see that the final temperature of the system and reservoirs is greater than $T_S$ due to the
transformation of the kinetic energy of the centers of mass of the reservoirs into heat thanks to the {\it viscosity}
terms in \eqref{eq:IVPT}. Observe that if $T_+=T_-$, after an initial transient, the temperature of the system is above
the temperatures of the two reservoirs. That is, the heat produced by the friction between the two reservoirs due to
their opposite motion heats up the system and then is diffused back to the reservoirs until equilibrium is reached.

\subsection{Static vs dynamic thermostats}\label{subsec:statdyn}

To make a connection with the definition of the thermostat in \cite{BLTV,BLP25}, we observe that for 
$g:\R^3\mapsto\R$ we can define
\begin{equation}\label{eq:B}
B_\sigma(t)[g](\vv)=\int_{\R^3}\int_{\mathds S^2} \Gamma_{\tau_\sigma(t)}(\vu^*_\sigma-\vec m_\sigma(t))g(\vv^*)d\omega 
d\vu_\sigma=
\int_{\R^3} R[\Gamma_{\tau_\sigma(t)}(\cdot-\vec m_\sigma(t))g](\vu_\sigma,\vv)d\vu_\sigma
\end{equation}
For $f:\R^{3M}\mapsto \R$, we can call $B_{k,\sigma(t)}[\tilde f]$ the operator \eqref{eq:B} acting on $\vv_k$,
which is precisely the form of the thermostat defined in \cite{BLTV,BLP25}. Following \cite{CLM}, we can combine
the two thermostats and define
\begin{equation}\label{eq:BM}
B(t)[f](\vu):=\int_{\R^3} R[\mathcal M_tf](\vu,\vv)d\vu
\end{equation}
where
\[
\mathcal M_t(\vu):=\frac 12\left(\Gamma_{\tau_+(t)}(\vu-\vec m_+(t))+\Gamma_{\tau_-(t)}(\vu-\vec 
m_-(t))\right)
\]
and obtain
\begin{equation}\label{eq:oldPsi}
\Psi(t)[f] = \cL_S + 2\sum_{k=1}^M (B_{k}(t)[f] - f):=\cL_S[f] + \cL_{B(t)}[f]\, .
\end{equation}
where, as usual, $B_k(t)$ is the operator $B(t)$ acting on the $k$-th variable $\vv_k$. We observe that
\eqref{eq:oldPsi} shows that $\Psi(t)$ does not depend on $N$. Moreover we only need $\mathcal M(t)$ to be a probability
distribution for $\cL_{B(t)}[f]$ to define a generator on the space of distributions.

In \cite{BLP25} we studied the system + 2 reservoirs described in Subsection \ref{subsec:res} with the difference that
we assumed $\vec p_\sigma=\vec p_S=0$ and we compared its evolution with that of a system + 2 {\it static} thermostats,
that is two thermostats at rest and with temperature fixed at its initial value. We observe that in the present paper we
could not have used static thermostat. Indeed  for the $d_2$ distance between $F_t$ and $\widetilde F_t$ to be finite we
need $\vec {\tilde m}_S(t)=\vec m_S(t)$ and $\vec {\tilde m}_\sigma(t)=\vec m_\sigma(t)$, see \eqref{eq:cond} and
\eqref{eq:tildem}.

On the other hand, we could have used dynamic thermostats with evolving velocity but fixed temperatures equal to the
initial temperature of the reservoir. This would have lead to a result essentially identical to Theorem
\ref{thm:2thermo} with minor changes in the proof, see in particular \eqref{eq:d2long} and \eqref{eq:dottau}. As already
discussed in \cite{BLP25}, we cannot hope to obtain a uniform estimate for the distance between $F_t$ and $\widetilde
F_t$ without having a dynamic thermostat where the temperatures also evolve in time. On the other hand, the present
paper shows that this is not enough in itself. Indeed, we can see that the unbounded term in Theorem
\ref{thm:2thermo} arises from the presence of a long time scale in the comparison between the evolution of the
reservoirs and the evolution of the thermostats, see \eqref{eq:slow} and \eqref{eq:fullt2}. To obtain uniform in time
estimates, we would need to improve \eqref{eq:final} to an order of $1/N^{1 + \epsilon}$, for some $\epsilon>0$, where
it now reads $1/\sqrt{N}$. 
This is, actually, rather natural. Indeed, the thermostat we consider here is effectively a zeroth order in $1/N$
approximation of the Kac evolution, even with the moving thermostats. Hence, from an analytic point of view, using
dynamic thermostats is not enough. In order to improve the estimates, we need to add to the dynamic thermostat a
first-order approximation of the Kac evolution, which would probably arise from an analysis of the associated
Boltzmann-Kac equation.

Observe finally that for any fixed $t$, the equation $\Psi(t)[\phi_t]=0$ has a unique normalized solution. This is the
non equilibrium steady state the for system + 2 thermostats at fixed velocities $\vec m_\sigma(t)$ and temperatures
$\tau_\sigma(t)$. Since, for $N$ large, $\vec m_\sigma(t)$ and $\tau_\sigma(t)$ evolve very slowly, it is natural to
think that $f_t\simeq \phi_t$. That is, on a time scale much longer than $N/M$, the reservoirs evolve along the
canonical states determined by the solution of \eqref{eq:IVPM1} and \eqref{eq:IVPT} while the sytem evolve along a
slowly varying sequence of non equilibrium steady states $\phi_t$. This is the effective picture that we hope to
establish in our forthcoming works.

\medskip

\section{Proof of the Theorem \ref{thm:2thermo}}\label{sec:proofs}

We will begin with the same general set-up as in \cite{BLP25} and adjust the argument where necessary to account for
the evolving thermostats.

\subsection{Step 1: Duhamel expansion}\label{subsec:duhamel}

Our goal here is to reduce the computation of $d_2(F_t, \widetilde F_t)$ to a point where we may appeal to the analysis
we performed in \cite{BLP25}. As in \cite{BLP25}, we begin with a Duhamel expansion that yields %
\begin{align*}
\begin{split}
F_t - \widetilde F_t &= \int_0^t e^{\mathcal L(t - s)} \left[\mathcal L [\widetilde F_s] - \frac{d}{ds} \widetilde 
F_s\right] \ ds\\
&= \int_0^t e^{\mathcal L(t - s)} \left[\mathcal L [\widetilde F_s] - \Psi(s)[f_s] G_{+,s}^N G_{-,s}^N - f_s 
\frac{d}{ds}\left(G_{+,s}^NG_{-,s}^N\right) \right] \ ds\\
\end{split}
\end{align*}
Next, we pass the above into Fourier space. Let $\uxi=(\vxi_1,\ldots\vxi_N)$ be the Fourier variables associated with 
$\uv$ while $\ueta_\sigma=(\veta_{\sigma,1},\ldots,\veta_{\sigma,1})$ are the Fourier variables associated
with $\uu_\sigma$. We will also write $\ueta=(\veta_1,\ldots,\veta_N)$ with $\veta_i=(\veta_{+,1},\veta_{-,1})$ for the 
combined Fourier variables of the two reservoirs. We 
obtain
\begin{align*}
\widehat F_t - \widehat {\widetilde F}_t = \int_0^t e^{\mathcal L(t - s)} \left[\mathcal L [\widehat {\widetilde F}_s] -
\widehat\Psi(s) [\hat{f}_s] \widehat G_{+,s}^N\widehat G_{-,s}^N - \sum_{\sigma\in\{+,-\}}\left(2\pi
i\scal{\ueta_\sigma, \dot{\underline m}_\sigma(s)}+\frac 12\|\ueta_\sigma\|^2\dot\tau_\sigma(s)\right)\widehat
{\widetilde F}_s\right] \ ds
\end{align*}
where we have used that $\widehat {\mathcal L}=\mathcal L$. Moreover, for $\sigma\in\{+,-\}$,
\[
\widehat G_{\sigma,s}^N(\ueta_\sigma) =\prod_{i=1}^N\widehat G_{\sigma,s}(\veta_{\sigma,i}) \qquad\hbox{with}\qquad 
\widehat G_{\sigma,s}(\veta)=e^{-2\pi i\langle \veta, \vec m_\sigma(t)\rangle} e^{-\frac 
{\tau_\sigma(s)}{2} \|\veta\|^2}
\]
is the Fourier transform of $ G_{\sigma,s}^N$ while $\widehat\Psi(s)$ is the Fourier transform of $\Psi(s)$ and it
is given  by
\[
\widehat\Psi(s)[\hat{f}](\uxi)=\mathcal L[\widehat G_{s,-}^N\hat{f}\widehat G_{+,s}^N]( \uxi,0)\, .
\]
Thus we obtain
\begin{equation}\label{eq:d2long}
\begin{aligned}
d_2(F_t, \widetilde F_t) \leq 
\sup_{\underline \xi, \underline \eta \ne 0} \frac{1}{\|\underline \xi\|^2 + \|\underline\eta\|^2 }\int_0^t ds
\bigg|&\mathcal L [\widehat {\widetilde F}_s](\uxi,\ueta) - 
\widehat\Psi(s) [\hat{f}_s](\uxi) \widehat G_{+,s}^N(\ueta_+)\widehat G_{-,s}^N(\ueta_-) - \\
& \sum_{\sigma\in\{+,-\}}\left(2\pi i\scal{\ueta_\sigma, \dot 
{\underline m}_\sigma(s)}+\frac 12\|\ueta_\sigma\|^2\dot \tau_\sigma(s)\right)\widehat {\widetilde
F}_s(\uxi,\ueta)\bigg| .\end{aligned}
\end{equation}
We first observe that
\begin{equation}\label{eq:dottau}
\sup_{\underline \xi, \underline \eta \ne 0} \frac{ \sum_{\sigma\in\{+,-\}}\|\ueta_\sigma\|^2|\dot 
\tau_\sigma(s)|\widehat {\widetilde
F}_s(\ueta_-,\uxi,\ueta_+)}{\|\underline \xi\|^2 + \|\underline\eta\|^2 }\leq \frac {M}{3N}\sum_{\sigma\in\{+,-\}}
\left( |\tau_\sigma(s) - \tau_S(s)|+\frac{1}{3} \|\vec m_\sigma(s)-\vec m_S(s)\|^2\right)
\end{equation}
where we have used \eqref{eq:IVPT}. To bound the remaining terms in the r.h.s of \eqref{eq:d2long}, mirroring our
approach from \cite{BLP25}, we can expand $\mathcal L$ into a suitable sum involving the collision operators
$R^{I_\sigma}_{i,j}$. Since each collision only sees two of the particles at a time, it is convenient to `factor' the
remaining particles out of the collision to simplify the expression, so we begin by setting
$\ueta_\sigma^k=(\veta_{\sigma,1},\ldots,\veta_{\sigma,k-1},\veta_{\sigma,k+1},\ldots,\veta_{\sigma,N})$ and
$\ueta^k=(\veta_{1},\ldots,\veta_{k-1},\veta_{k+1},\ldots,\veta_{N})$. Clearly, $\underline \eta^k$ is simply
$\underline \eta$ with the $k$th component, $\veta_{k}$, removed. 
For convenience of notation, we also define the operator $H_s$: %
\begin{equation}\label{eq:H}
\begin{aligned}
H_{s}[\hat f](\underline \xi, \vec\eta_{k}) =\sum_{j = 1}^M\sum_{\sigma\in\{\pm\}}&\left( R^{I_\sigma}_{k,j}[\hat 
{f}\widehat G_{\sigma,s}^1](\underline \xi, \veta_{\sigma,k}) - 
R^{I_\sigma}_{k,j}[\hat 
{f}\widehat G_{\sigma,s}^1](\underline \xi, 
0)G_{\sigma,s}^1(\veta_{\sigma,k})\right)G_{\sigma',s}^1(\veta_{\sigma',k})-\\
& 2\pi i N \widehat G_{+,s}^1(\veta_{+,k})\widehat G_{-,s}^1(\veta_{-,k})\sum_{\sigma\in\{+,-\}}\langle\vec 
\eta_{\sigma,k}, \dot{\vec m}_\sigma(s)\rangle\hat {f}(\underline \xi)
\end{aligned}
\end{equation}
where $\sigma' = -\sigma$.
With this, we can indeed `factor' the integrand above and obtain
\begin{equation*}
\begin{aligned}
\mathcal L [\widehat {\widetilde F}_s](\uxi,\ueta) - 
\widehat\Psi(s) [\hat{ f}_s](\uxi) \widehat G_s^+(\ueta_+)\widehat G_s^-(\ueta_-) - &2\pi 
i\sum_{\sigma\in\{\pm\}}\langle\ueta_\sigma, \dot{\underline m}_\sigma(s)\rangle\widehat {\widetilde
F}_s(\uxi,\ueta)=\\
&\frac 1 N  \sum_{k = 1}^N H_{s}[\hat{ f}_s](\vec \xi_j, \vec\eta_{k}) \widehat 
G_{+,s}^{N-1}(\ueta_+^k)\widehat G_{-,s}^{N-1}(\ueta_-^k)
\end{aligned}
\end{equation*}
\begin{Remark}\emph{
Note that the first line in \eqref{eq:H} is, at first sight, of order $M$ while the second line apparently behaves
differently (and in a problematic way) as a function of $N$ and $M$. Closer inspection, however, reveals this is not the
case since, from \eqref{eq:IVPM1}, we see that $\frac{d}{ds}\vec m_\sigma(s)=\frac{M}{3N}(\vec m_S(s) - \vec
m_\sigma(s))$ and thus the second line in \eqref{eq:H} is also of order $M$. }
\end{Remark}

We can now define for any 
$K \geq 1$
\begin{equation}\label{eq:DN}
\mathcal D_K(H, g,\uxi):=\sup_{\ueta\not=0}\frac{\left|\sum_{i=1}^K 
H(\uxi,\veta_{i})\prod_{j\not=i}^Kg(\veta_{j})\right|}{\|\uxi\|^2 + \|\ueta\|^2} \, .
\end{equation}
so that setting $\widehat G_{s}^2(\veta_k)= \widehat G_{+,s}(\veta_{+,k})\widehat G_{-,s}(\veta_{-,k})$ we can 
finally write
\[
\begin{aligned}
\sup_{\uxi, \ueta \ne 0} 
\frac1{\|\underline \xi\|^2 + \|\underline\eta\|^2}\biggl|\mathcal L [\widehat {\widetilde 
F}_s](\ueta_-,\uxi,\ueta_+) 
- &\widehat\Psi(s) [\hat{f}_s)](\uxi) \widehat G_{+,s}^N(\ueta_+)\widehat G_{-,s}^N(\ueta_-) - \\
&2\pi i\sum_\sigma\langle\ueta_\sigma, \dot {\underline m}_\sigma(s)\rangle\widehat {\widetilde 
F}_s(\ueta_-,\uxi,\ueta_+)\biggr| 
 \leq \frac 1N \mathcal D_N(H_{s}[\hat{f}_s],\widehat G_{s}^2,\uxi)\, .
\end{aligned}
\]

A straightforward computation (and the fact that $\vec m_{\sigma}(s)$ solves \eqref{eq:IVPM1}) verifies that 
\begin{align}
H_{s}[\hat{f}_s](\underline \xi, 0) &= 0\\
\nabla_{\vec\eta} H_{s}[\hat{f}_s]( 0,  0) &= \vec 0.
\end{align}
Moreover, we can define 
\begin{equation*}
	\|H\|_{3,1}:=\sup_{\|\vec 
\alpha_\eta\|_1\leq3,\|\vec\alpha_\xi\|\leq 
1}\bigl\|\partial_{\vec\eta}^{\vec\alpha_\eta}\partial_{\vec\xi}^{\vec\alpha_\xi}
H\bigr\|_\infty \, .
\end{equation*}
In \cite{BLP25}, we estimated $\|H_s(\hat f_s)\|_{3,1}$ uniformly in time in the special case where $p_+ = p_S = p_- =
0$. In Appendix \ref{app:moments}, we will present an improved and streamlined analysis of $\|H_s(\hat f_s)\|_{3,1}$.
Indeed, since the derivative of $\hat f_s$ are linked to the moments of $f_s$, by analyzing the latter we will show
that, for a suitable constant $C_2$,
\begin{equation*}
\sup_s \|H_s[\hat{ f}_s]\|_{3,1} \leq C_2M E_4(f_0)
\end{equation*}
where
\begin{equation*}
E_K(f_0):=\sup_{\underline\alpha\in\N^{3M}, \|\underline\alpha\|_1\leq K}\int \uv^{\underline\alpha}f(\uv)f\uv
\end{equation*}
and $\underline \alpha=(\vec\alpha_1,\ldots,\vec\alpha_M)$, while
\[
\uv^{\underline\alpha}=\prod_{i=1}^M \vec v_i^{\vec \alpha_i}\qquad\hbox{with}\qquad\vv^{\vec\alpha}=\prod_{i=1}^3 
v_i^{\alpha_i}
\]
and
\[
\|\underline \alpha\|_1=\sum_{i=1}^3\|\vec\alpha_i\|_1\qquad\hbox{with}\qquad\|\vec \alpha\|_1=\sum_{i=1}^3\alpha_i\,.
\]

Finally, a simple extension of our results in \cite{BLP25} gives rise to the following key technical Lemma: 

\begin{mylemma} \label{lem:prop5new} Let $H(\uxi, \veta) = H: \mathds \mathds R^p \times R^6 \to\mathds C$ be 
such that 
\[
\nabla_{\veta}H(0,0)=0
\] 
and, for every $\uxi\in\R^p$, 
\[
H(\uxi, 0)=0\, .
\] 
Assume moreover that 
\begin{equation*}
	\|H\|_{3,1} < \infty.	
\end{equation*}
Let also $g:\mathds R^6\mapsto\mathds C$ be such that, for some $T$, 
\[
|g(\vec\eta)|\leq \frac1{1+T\|\vec\eta^2\|}.
\]
Then for $\mathcal D_N(H,g,\uxi)$ defined in \eqref{eq:DN} we have 
\begin{equation*}
\mathcal D_N(H,g,\uxi) \leq C_1(T)\sqrt{N}
\|H\|_{3,1}^{\frac56} \mathcal D_1(H,g,\uxi)^{\frac16} .
\end{equation*}
for a suitable constant $C_1(T)$.
\end{mylemma}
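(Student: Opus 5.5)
We will bound the numerator of $\mathcal D_N$ by splitting the sum over $i$ into a few ``large'' $\veta_i$ and many ``small'' ones, and then optimize over a threshold. This is essentially the argument of \cite[Lemma 3.3]{BLTV} as adapted in \cite{BLP25}.

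Fix $\uxi$ and $\ueta\neq0$, and set $D:=\mathcal D_1(H,g,\uxi)$. Since $|g|\le 1$, each single term satisfies
\[
|H(\uxi,\veta_i)|\le D\,(\|\uxi\|^2+\|\veta_i\|^2).
\]
We also use a Taylor bound. From $H(\uxi,0)=0$ and $\nabla_{\veta}H(0,0)=0$ we get $\nabla_\veta H(\uxi,0)=O(\|H\|_{3,1}\|\uxi\|)$. Expanding $H(\uxi,\cdot)$ to second order around $\veta=0$ then gives the decomposition
\[
H(\uxi,\veta)=\langle \vec a(\uxi),\veta\rangle+Q_\uxi(\veta)+O\bigl(\|H\|_{3,1}\|\veta\|^3\bigr),
\]
with $\|\vec a(\uxi)\|\le\|H\|_{3,1}\|\uxi\|$ and $Q_\uxi$ quadratic with coefficients bounded by $\|H\|_{3,1}$.

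The key steps, in order, are as follows.

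(i) Fix a threshold $\delta>0$. Call $i$ large if $\|\veta_i\|>\delta$, and let $\mathcal B$ be the set of such $i$. Then $|\mathcal B|\le\|\ueta\|^2/\delta^2$. The product $\prod_{j\ne i}g(\veta_j)$ contains at least $|\mathcal B|-1$ factors bounded by $(1+T\delta^2)^{-1}$. Hence the large terms contribute at most
\[
D\,(\|\uxi\|^2+\|\ueta\|^2)\,\bigl(1+|\mathcal B|(1+T\delta^2)^{-(|\mathcal B|-1)}\bigr),
\]
and the bracket is bounded by a constant depending on $T$ and $\delta$. If $\delta$ is allowed to be small, this loses a factor of order $\delta^{-2}$.

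(ii) For the small indices, insert the Taylor expansion. The cubic remainders sum to at most $\|H\|_{3,1}\,\delta\sum_i\|\veta_i\|^2$. The quadratic terms sum to at most $\|H\|_{3,1}\|\ueta\|^2$. This piece is harmless once the decay of the $g$-product is used: either $\|\ueta\|$ is small, or many $\veta_j$ are non-negligible and the product is small. This is the case split inherited from \cite{BLTV}.

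(iii) The linear terms $\sum_i\langle\vec a(\uxi),\veta_i\rangle\prod_{j\neq i}g(\veta_j)$ are the main obstacle. Cauchy--Schwarz gives only
\[
\|\vec a(\uxi)\|\sqrt N\,\|\ueta\|\le\|H\|_{3,1}\sqrt N\,(\|\uxi\|^2+\|\ueta\|^2).
\]
This is the source of the $\sqrt N$, and no cancellation is available since $g$ is complex.

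(iv) The linear term so far carries no factor of $D$. To fix this, bound $\|\vec a(\uxi)\|$ in terms of $D$ by comparing it with the single-particle quantity. Taking $\veta=t\,\vec a/\|\vec a\|$ in the $K=1$ estimate and optimizing in $t$ against the cubic Taylor error yields an interpolation inequality of the form
\[
\|\vec a(\uxi)\|\lesssim \|H\|_{3,1}^{5/6}D^{1/6}\|\uxi\|.
\]
The exponents $5/6$ and $1/6$ come from balancing
\[
D(\|\uxi\|^2+t^2)\quad\text{against}\quad\|H\|_{3,1}t^3+\|H\|_{3,1}t^2 .
\]
Treat the quadratic and cubic pieces in steps (i)--(ii) the same way, choosing $\delta$ as a power of $D/\|H\|_{3,1}$. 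Combining all contributions, the dominant term is $C_1(T)\sqrt N\|H\|_{3,1}^{5/6}D^{1/6}(\|\uxi\|^2+\|\ueta\|^2)$. Dividing by $\|\uxi\|^2+\|\ueta\|^2$ and taking the supremum over $\ueta$ gives the claim.

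The delicate part is getting this exact interpolation exponent uniformly in $\uxi$. For this I would follow \cite[Lemma 3.3]{BLTV} verbatim, replacing $\R^3$ by $\R^6$.
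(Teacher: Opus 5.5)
Your fallback, running \cite[Lemma 3.3]{BLTV} verbatim with $\R^3$ replaced by $\R^6$, is exactly what the paper does: its whole proof is the remark that the argument of \cite[Lemma 3.3]{BLP25} carries over. The sketch you build around that citation, however, does not stand on its own, for two reasons.

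The first is in (ii). The quadratic Taylor terms, bounded by $\|H\|_{3,1}\|\ueta\|^2$, contribute $\|H\|_{3,1}$ to $\mathcal D_N$ with no power of $D$. The branch ``$\|\ueta\|$ is small'' of your case split does not help, because the denominator is then small too. At $\uxi=0$ with all $\veta_i$ small, the $g$-product can be $\approx 1$, and the ratio is of the size of $Q_0$. It is small only because $|H(0,\veta)|\le D\|\veta\|^2$ forces $|Q_0|\le D$. You never carry out such an interpolation for $Q_{\uxi}$, and it cannot be uniform in $\uxi$: for $H=\|\veta\|^2e^{-\|\veta\|^2}$ at $\|\uxi\|=R\gg1$ one has $D\sim R^{-2}$, while $Q_{\uxi}=\|\veta\|^2$. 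It would therefore have to be combined with the decay of $g$, in a way that ``treat them the same way'' does not specify.

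The second is in (iv), where the exponent bookkeeping is off. Since $|H(\uxi,t\vec e)-t\langle\vec a,\vec e\rangle|\le C\|H\|_{3,1}t^2$ for all $t>0$ and unit $\vec e$, the cubic term plays no role. Choosing $t=\|\uxi\|\sqrt{D/\|H\|_{3,1}}$ in $t|\langle\vec a,\vec e\rangle|\le D(\|\uxi\|^2+t^2)+C\|H\|_{3,1}t^2$ gives $\|\vec a(\uxi)\|\le C\sqrt{D\|H\|_{3,1}}\,\|\uxi\|$. This is stronger than what you state, because $D\le C\|H\|_{3,1}$; that follows from your bound $\|\vec a\|\le\|H\|_{3,1}\|\uxi\|$ together with the same remainder estimate.

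That estimate actually closes the proof by itself. The change needed is to split at a threshold proportional to $\|\uxi\|$ rather than at an absolute $\delta$. Then the second-order remainder is absorbed into the cross term, and $Q_{\uxi}$ never has to be isolated. Put $A=\|H\|_{3,1}$ and $\epsilon=\sqrt{D/A}$, and consider three cases.

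If $\|\veta\|\ge\|\uxi\|$, then $|H(\uxi,\veta)|\le D(\|\uxi\|^2+\|\veta\|^2)\le 2D\|\veta\|^2$.

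If $\epsilon\|\uxi\|\le\|\veta\|<\|\uxi\|$, then $|H(\uxi,\veta)|\le 2D\|\uxi\|^2\le 2\sqrt{DA}\,\|\uxi\|\|\veta\|$.

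If $\|\veta\|<\min(\epsilon,1)\|\uxi\|$, then $|H(\uxi,\veta)|\le\|\vec a\|\|\veta\|+CA\|\veta\|^2\le C\sqrt{DA}\,\|\uxi\|\|\veta\|$.

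Using only $|g|\le 1$ and Cauchy--Schwarz,
\[
\Big|\sum_{i=1}^N H(\uxi,\veta_i)\prod_{j\ne i}g(\veta_j)\Big|\le\sum_{i=1}^N|H(\uxi,\veta_i)|\le C\sqrt{NDA}\,\|\uxi\|\|\ueta\|+2D\|\ueta\|^2 ,
\]
so $\mathcal D_N(H,g,\uxi)\le C\sqrt N\sqrt{AD}+2D\le C\sqrt N\,A^{5/6}D^{1/6}$, again since $D\le CA$.

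Everything happens at the fixed $\uxi$ that appears on both sides of the lemma, so there is no uniformity issue in $\uxi$. Neither the decay of $g$ nor third $\veta$-derivatives are needed. This gives a short self-contained proof, with the better exponent $\frac12$ on $D$, in place of the paper's appeal to \cite{BLP25}.
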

 
\begin{Remark}
\emph{In \cite[Lemma 3.3]{BLP25} we proved the above statement for $H(\vec\xi, \vec \eta) = H: \mathds \mathds R^p
\times R^3 \to\mathds C$. It is not hard to check that the proof work essentially without modifications for the
situation of Lemma \ref{lem:prop5new}. }
\end{Remark}

We thus obtain
\begin{equation}\label{eq:final}
d_2(F_t, \widetilde F_t) \leq 
\frac 1 {\sqrt N }\int_0^t C\|H_s[\hat{ f}_s]\|_{3,1}^{\frac 56}\sup_{\underline \xi \ne 0} 
\left(
\mathcal D_1(H_s[\hat{ f}_s], \widehat G_{s}^2, \underline \xi)\right)^{\frac 16}ds
\end{equation}
It remains for us to understand $\mathcal D_1(H_s[\hat{f}_s], \widehat G_{\pm,s}^2, \underline \xi)$ and 
$\|H_s[\hat{f}_s]\|_{3,1}$. We will analyze the former in 
the next Subsection and, as we discussed above, 
the latter in Appendix \ref{app:moments}.

\subsection{Step 2: estimating $\mathcal D_1(H_s[\hat{f}_s], \widehat G_{\pm,s}^2, \underline \xi)$}\label{subsec:1part}
 
Observing that $\mathcal D_1(H_s[\hat{f}_s], \widehat G_{s}^2, \underline \xi)=\sup_{\veta_{1}} |H_s[
f_s](\uxi,\veta_{1})|/(\|\uxi\|^2+\|\veta_{1}\|^2)$, it remains to understand
\[
D_1:=\sup_{\uxi,\veta_{1}} \frac{|H_s[\hat {f}_s](\uxi,\veta_{1})|}{\|\uxi\|^2+\|\veta_{1}\|^2}.
\]

We reached a similar point in $\cite{BLP25}$. In that work, we were able to directly relate $D_1$ to the $d_2$-distance
between the initial state and the stead-state. That is not possible here, since the $d_2$-distance between the initial
state and the steady state is infinite when $\vec m_S \ne 0$, which is an important part of this paper. Moreover, such
an analysis from \cite{BLP25} would obfuscate the various timescales present in the current evolution: the
equilibration of the system, the evolution of the reservoir velocities, and the evolution of the reservoir
temperatures. We will instead break-up $H_s[\hat{f}_s]$ into pieces which capture these different timescales.

We will do this in three steps. Observe first that, from Proposition \ref{Prop:Contraction} below, it follows that
\[
\lim_{t\to\infty}  f_t(\uv) = \Gamma_{\tau_\infty}^M(\uv)
\]
where $\tau_\infty$ is defined in \eqref{eq:infty}.
We thus define
\[
\Gamma^M_{\infty,s}(\uv) = \Gamma_{\tau_\infty}^M(\uv - \underline m_S(s)).
\]
That is, $\Gamma^M_{\infty,s}$ is the thermostated equilibrium state re-centered around $\underline m_S(s)$. Also let
\[
h_s=\Phi_{s,0}\Gamma^M_{\infty,0}(\uv)\, .
\]
That is, $h_s$ starts at the same initial state as $\Gamma^M_{\infty,s}$ but evolves according to the thermostated
evolution.

Since $H_s[f]$ is linear we can write
\[
\begin{aligned}
D_1&\leq \sup_{\uxi,\veta_{1}} \frac{|H_s[\hat {f}_s-\hat 
h_s](\uxi,\veta_{1})|}{\|\uxi\|^2+\|\veta_{1}\|^2}+
\sup_{\uxi,\veta_{1}} \frac{|H_s[\hat h_s-\widehat 
\Gamma^M_{\infty,s}](\uxi,\veta_{1})|}{\|\uxi\|^2+\|\veta_{1}\|^2}+
\sup_{\uxi,\veta_{1}} \frac{|H_s[\widehat 
\Gamma^M_{\infty,s}](\uxi,\veta_{1})|}{\|\uxi\|^2+\|\veta_{1}\|^2}=\\
&:= (I) + (II) + (III).
\end{aligned}
\]
Loosely speaking, we will show that $(I)$ is bounded because $\Phi_{t,s}$ is a contraction; we will show that $(II)$ is 
bounded via a second Duhamel expansion; and we will show that $(III)$ is bounded due to the explicit form of $H_s$ and 
$\Gamma^M_{\infty,s}$. This estimates will be based on several propositions. 

Our estimate for $(I)$, $(II)$ and $(III)$ will be in term of the shears $\delta_\sigma(s):=\vec m_\sigma(s)-\vec 
m_S(s)$ and temperatures gradients $\epsilon_\sigma(s):=\tau_\sigma(s)-\tau_S(s)$. Our first proposition relates these 
quantities with their initial values.

\begin{Prop}\label{Prop:Delta}
If $\vec m_\sigma$ and $\vec m_S$ solve the IVP \eqref{eq:IVPM1} then we have
\begin{equation}\label{eq:Deltaps}
\|\vec \delta_+(s)\|^2+\|\vec \delta_-(s)\|^2\leq e^{-\frac {2Ms}{3N}}\Delta_p\, .
\end{equation}
Similarly if $\tau_\sigma$ and $\tau_S$ solve the IVP \eqref{eq:IVPM1} then we have
\begin{equation}\label{eq:DeltaTs}
|\epsilon_+(s)|+|\epsilon_-(s)|\leq e^{-\frac {2Ms}{3N}}\left(2\Delta_T+\frac 13 \Delta_p\right)\, .
\end{equation}
$\Delta_p$ and $\Delta_T$ are defined in \eqref{eq:defp}. Moreover we have
\begin{equation}\label{eq:ttinf}
|\tau_\sigma(s)-\tau_\infty|\leq \frac 12(|\epsilon_+(s)|+|\epsilon_-(s)| + \|\vec \delta_+(s)\|^2+\|\vec 
\delta_-(s)\|^2)\, .
\end{equation}
\end{Prop}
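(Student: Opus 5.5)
The plan is to derive closed linear ODEs for the shears $\vec\delta_\sigma=\vec m_\sigma-\vec m_S$ and the gradients $\epsilon_\sigma=\tau_\sigma-\tau_S$ directly from \eqref{eq:IVPM1} and \eqref{eq:IVPT}. I will then bound them with a Lyapunov functional for the shears and a Duhamel formula for the temperatures. Finally, \eqref{eq:ttinf} will come from energy conservation.

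\emph{Shears.} From \eqref{eq:IVPM1} we have $\dot{\vec m}_\sigma=-\frac{M}{3N}\vec\delta_\sigma$ and $\dot{\vec m}_S=\frac13(\vec\delta_++\vec\delta_-)$. Hence
\[
\dot{\vec\delta}_\sigma=-\frac M{3N}\vec\delta_\sigma-\frac13(\vec\delta_++\vec\delta_-).
\]
Differentiating $\|\vec\delta_+\|^2+\|\vec\delta_-\|^2$ gives
\[
\frac{d}{ds}\left(\|\vec\delta_+\|^2+\|\vec\delta_-\|^2\right)=-\frac{2M}{3N}\left(\|\vec\delta_+\|^2+\|\vec\delta_-\|^2\right)-\frac23\|\vec\delta_++\vec\delta_-\|^2 .
\]
The last term is nonpositive, so Gronwall gives \eqref{eq:Deltaps}, since the initial value of the functional is exactly $\Delta_p$.

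\emph{Temperature gradients.} Subtracting the rows of \eqref{eq:IVPT} gives
\[
\dot\epsilon_\sigma=-\frac M{3N}\epsilon_\sigma-\frac13(\epsilon_++\epsilon_-)+\frac M{9N}\|\vec\delta_\sigma\|^2-\frac19\left(\|\vec\delta_+\|^2+\|\vec\delta_-\|^2\right).
\]
I will split into the symmetric combination $\epsilon_++\epsilon_-$ and the antisymmetric one $\epsilon_+-\epsilon_-$.

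The symmetric combination has linear rate $\frac M{3N}+\frac23$. Its forcing is bounded by $\frac13 e^{-\frac{2Ms}{3N}}\Delta_p$ by the previous step. Duhamel then bounds it by $e^{-\frac{2Ms}{3N}}(|\epsilon_+(0)+\epsilon_-(0)|+\frac13\Delta_p)$, because the linear rate exceeds $\frac{2M}{3N}$ when $N\ge M$.

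The antisymmetric combination satisfies
\[
\frac{d}{ds}(\epsilon_+-\epsilon_-)=-\frac M{3N}(\epsilon_+-\epsilon_-)+\frac M{9N}\left(\|\vec\delta_+\|^2-\|\vec\delta_-\|^2\right).
\]
Its forcing is $O(\frac MN e^{-\frac{2Ms}{3N}}\Delta_p)$, but its homogeneous rate is only $\frac M{3N}$.

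This antisymmetric mode is where I expect the main obstacle. As written, it only yields $|\epsilon_+-\epsilon_-|\le e^{-\frac{Ms}{3N}}(|T_+-T_-|+\frac13\Delta_p)$. The explicit solution in Subsection \ref{subsec:hidro} has a term $\frac{T_+-T_-}2e^{-\frac M{3N}t}$, which suggests this rate cannot be improved when $T_+\neq T_-$.

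To reach the stated rate $e^{-\frac{2Ms}{3N}}$, I would first try to exploit the relation between $\epsilon_\pm$ and $\Delta_T$ together with the total-momentum constraint $N\vec m_++N\vec m_-+M\vec m_S=0$. I would also combine the two estimates via $|\epsilon_+|+|\epsilon_-|\le|\epsilon_++\epsilon_-|+|\epsilon_+-\epsilon_-|$ and $|\epsilon_\sigma(0)|\le\Delta_T$, which accounts for the factor $2\Delta_T$. If the antisymmetric part genuinely resists, the honest conclusion is that the antisymmetric contribution decays at the rate $e^{-\frac{Ms}{3N}}$ rather than $e^{-\frac{2Ms}{3N}}$. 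I would check whether that weaker rate is all that is actually used later, in the integrals over $s$ leading to the $t$-linear term of Theorem \ref{thm:2thermo}.

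\emph{Distance to $\tau_\infty$.} The total energy
\[
N\left(3\tau_++\|\vec m_+\|^2\right)+N\left(3\tau_-+\|\vec m_-\|^2\right)+M\left(3\tau_S+\|\vec m_S\|^2\right)
\]
is conserved; this follows from \eqref{eq:IVPM1}--\eqref{eq:IVPT} by direct differentiation. By \eqref{eq:infty}, this energy equals $3N^*\tau_\infty$.

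So $\tau_\infty$ is a convex average of $\tau_+,\tau_-,\tau_S$, plus $\frac1{3N^*}$ times the center-of-mass kinetic energy. Writing $\tau_\sigma-\tau_\infty$ as a combination of $\epsilon_\pm$ with weights at most $\frac12$ handles the temperature part. For the kinetic part, the zero-total-momentum condition lets me write $\sum_\sigma N\|\vec m_\sigma\|^2+M\|\vec m_S\|^2$ as a quadratic form in $\vec\delta_\pm$, with coefficients at most $\frac32 N^*$. This gives \eqref{eq:ttinf}.
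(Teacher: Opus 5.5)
Your shear argument is correct. It differs from the paper only in method: you use the Lyapunov identity $\frac{d}{ds}(\|\vec\delta_+\|^2+\|\vec\delta_-\|^2)=-\frac{2M}{3N}(\|\vec\delta_+\|^2+\|\vec\delta_-\|^2)-\frac23\|\vec\delta_++\vec\delta_-\|^2$. The paper instead solves explicitly for the modes $\vec\delta_+\pm\vec\delta_-$, which decay at rates $\frac13(2+\frac MN)$ and $\frac M{3N}$, and then applies the parallelogram identity. Your route avoids diagonalization. The paper's route exhibits the slow mode explicitly, and that slow mode is exactly what matters for the temperatures.

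Your doubt about \eqref{eq:DeltaTs} is justified: as stated, it is false, and no use of the momentum constraint can rescue it.

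\textbf{Counterexample.} Take $\vec p_+=\vec p_-=\vec p_S$, so that $\Delta_p=0$ and $\vec\delta_\pm\equiv0$. Take $T_\pm=T_S\pm a$ with $0<a<T_S$.
\begin{itemize}
\item All forcing terms vanish, and $\epsilon_++\epsilon_-\equiv0$.
\item Hence $\epsilon_\pm(s)=\pm a\,e^{-\frac{Ms}{3N}}$, so the left side of \eqref{eq:DeltaTs} is $2a\,e^{-\frac{Ms}{3N}}$.
\item The right side is $4a\,e^{-\frac{2Ms}{3N}}$, so the inequality fails for $s>\frac{3N}{M}\ln 2$.
\end{itemize}
In this example all velocities vanish identically, so the momentum constraint carries no information. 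The obstruction is the eigenvalue $-\frac M{3N}$ acting on $\epsilon_+-\epsilon_-$. The paper's own proof has exactly the gap you located: it writes $\epsilon_+-\epsilon_-=e^{-\frac{sM}{3N}}(\epsilon_+(0)-\epsilon_-(0))+\dots$ and then asserts the faster rate.

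\textbf{What your two Duhamel estimates do prove} is
\[
|\epsilon_+(s)|+|\epsilon_-(s)|\le e^{-\frac{Ms}{3N}}\Big(2\Delta_T+\frac23\Delta_p\Big).
\]
As you suspected, this weaker rate suffices downstream:
\begin{itemize}
\item The sixth root in \eqref{eq:slow} now gives $e^{-\frac{Ms}{18N}}$. This integrates to $\frac{18N}{M}(1-e^{-\frac{Mt}{18N}})\le t$, so the $t$-linear term of \eqref{eq:mainth} survives.
\item In Corollary~\ref{coro:1thermo} one has $\epsilon_+=\epsilon_-$ and $\vec\delta_+=\vec\delta_-$ for all $s$, so the slow mode is absent.
\end{itemize}

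\textbf{A small slip in your symmetric estimate.} A forcing bound $\frac13 e^{-\frac{2Ms}{3N}}\Delta_p$, combined with the rate difference $\frac23-\frac M{3N}$, gives $\frac1{2-M/N}\Delta_p$, not $\frac13\Delta_p$. Use the exact forcing coefficient $\frac19(2-\frac MN)$ instead; its ratio to the rate difference is exactly $\frac13$.

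\textbf{The error in your argument for \eqref{eq:ttinf}.} Your step ``weights at most $\frac12$'' is wrong. Your energy-conservation formula is correct, and with $N\vec m_++N\vec m_-+M\vec m_S=0$ it gives
\[
\tau_+-\tau_\infty=\frac{(N+M)\epsilon_+-N\epsilon_-}{N^*}-\frac1{3N^*}\Big(N\big(\|\vec\delta_+\|^2+\|\vec\delta_-\|^2\big)-\frac{N^2}{N^*}\|\vec\delta_++\vec\delta_-\|^2\Big).
\]
The weight $\frac{N+M}{2N+M}$ on $\epsilon_+$ exceeds $\frac12$. For example, take all $\vec p$'s equal, $T_-=T_S$ and $T_+=T_S+a$. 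Then $\tau_+(0)-\tau_\infty=\frac{N+M}{N^*}a>\frac a2$. So \eqref{eq:ttinf} is already false at $s=0$, by the factor $1+\frac M{N^*}$.

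\textbf{The correct version of \eqref{eq:ttinf}.} The kinetic bracket lies between $0$ and $N(\|\vec\delta_+\|^2+\|\vec\delta_-\|^2)$, so it contributes at most $\frac16(\|\vec\delta_+\|^2+\|\vec\delta_-\|^2)$. The correct statement therefore has the constant $\frac{N+M}{N^*}\le1$ in front of $|\epsilon_+|+|\epsilon_-|$. This is harmless for the later use, which only needs a generic constant $C$. Do not try to match the paper's displayed identity for $N^*\tau_\infty$: it drops the factor $\frac13$, the cross term, and the sign of the kinetic part.
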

\begin{proof} The simple proof of these statements is reported in Appendix \ref{app:proofs}.\end{proof}

It will be useful for us to be able to relate $d_2(H_s[g],0)$ to $d_2(g,0)$ where appropriate, so we consider that 
first. 

\begin{Remark}
\emph{In this Subsection and in the Appendices we will use the letter $C$ to indicate a generic constant independent of
$M$ and $N$. It is not supposed to have a fixed value even when it appears in the same formula multiple times.
}
\end{Remark}

\begin{Prop}\label{Prop:H}
Let $g:\R^{3M}\mapsto \R$ be such that $\hat 
g(0)=0$ and $\nabla \hat g(0)=0$ then
\begin{equation*}
\sup_{\uxi,\veta} \frac{|H_s[\hat g](\uxi,\veta)|}{\|\uxi\|^2+\|\veta\|^2}\leq CM(1+\Delta_p)d_2(g,0)\, .
\end{equation*}
\end{Prop}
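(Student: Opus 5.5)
The plan is to bound the two lines of \eqref{eq:H} separately and pointwise in $(\uxi,\veta)$. I will use only the defining estimate $|\hat g(\uxi)|\leq d_2(g,0)\|\uxi\|^2$, which is available because $\hat g(0)=0$ and $\nabla\hat g(0)=0$. I will also use two elementary facts about the Fourier-side collision. First, $R^{I_\sigma}_{k,j}$ acts on the pair of Fourier variables $(\vxi_j,\veta_{\sigma,k})$ by the same collision map \eqref{eq:coll}, averaged over $\omega\in\mathds S^2$. Second, this map preserves $\|\vxi_j\|^2+\|\veta_{\sigma,k}\|^2$. Finally, every Gaussian factor satisfies $|\widehat G_{\sigma,s}(\veta)|=e^{-\tau_\sigma(s)\|\veta\|^2/2}\leq 1$.

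\emph{Collision terms (first line of \eqref{eq:H}).} Fix $j$ and $\sigma$. In $R^{I_\sigma}_{k,j}[\hat g\widehat G^1_{\sigma,s}](\uxi,\veta_{\sigma,k})$ the integrand is $\hat g$ evaluated at $\uxi$ with $\vxi_j$ replaced by $\vxi_j^*(\omega)$, multiplied by $\widehat G_{\sigma,s}(\veta^*_{\sigma,k}(\omega))$. By the norm conservation, the modified $\uxi$ has squared norm at most $\|\uxi\|^2+\|\veta_{\sigma,k}\|^2$. Hence this term is bounded by $d_2(g,0)(\|\uxi\|^2+\|\veta\|^2)$. The subtracted term, where the collision is evaluated at $\veta_{\sigma,k}=0$, is bounded in the same way by $d_2(g,0)\|\uxi\|^2$. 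The extra factors $\widehat G_{\sigma,s}$ and $\widehat G_{\sigma',s}$ have modulus at most $1$. Summing over $j=1,\dots,M$ and $\sigma=\pm$ therefore gives a contribution of at most $4M\,d_2(g,0)(\|\uxi\|^2+\|\veta\|^2)$. No cancellation between the two pieces is needed here.

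\emph{Drift term (second line of \eqref{eq:H}).} By \eqref{eq:IVPM1}, $\dot{\vec m}_\sigma(s)=-\frac{M}{3N}\vec\delta_\sigma(s)$, so the factor $N$ cancels. The term is then bounded by
\[
\frac{2\pi M}{3}\sum_{\sigma}\|\veta_{\sigma,k}\|\,\|\vec\delta_\sigma(s)\|\,e^{-\tau_\sigma(s)\|\veta_{\sigma,k}\|^2/2}\,d_2(g,0)\|\uxi\|^2 .
\]
This is the one delicate point. The factor $\|\veta\|\,\|\uxi\|^2$ is not controlled by $\|\uxi\|^2+\|\veta\|^2$ on its own. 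The Gaussian fixes this, since $\|\veta\|e^{-\tau\|\veta\|^2/2}\leq (e\tau)^{-1/2}$.

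To make that constant uniform, I need a uniform lower bound on $\tau_\sigma(s)$. From \eqref{eq:IVPT} the matrix part is a Metzler (off-diagonal nonnegative) generator and the source terms are nonnegative. Hence $\tau_\sigma(s)\geq \min(T_+,T_-,T_S)e^{-\dots}$, or more simply $\tau_\sigma(s)\geq \min(T_+,T_-,T_S)>0$, by a minimum principle. This constant is absorbed into $C$, consistent with the Remark after Theorem \ref{thm:2thermo}. Then, by Proposition \ref{Prop:Delta}, $\|\vec\delta_\sigma(s)\|\leq \Delta_p^{1/2}\leq 1+\Delta_p$, and the drift term is at most $CM(1+\Delta_p)d_2(g,0)\|\uxi\|^2$.

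Adding the two bounds and dividing by $\|\uxi\|^2+\|\veta\|^2$ gives the claimed estimate $CM(1+\Delta_p)d_2(g,0)$. I expect the only real obstacle to be the drift term: both the Gaussian-absorption trick and the lower bound on the temperatures are needed there. The collision terms are routine once the rotation invariance of the Fourier-side collision is recorded.
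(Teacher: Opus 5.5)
Your proof is correct and follows the same route as the paper. You bound the collision terms of \eqref{eq:H} using $|\hat g(\uxi)|\le d_2(g,0)\|\uxi\|^2$ together with conservation of $\|\vxi\|^2+\|\veta\|^2$, and you handle the drift term, where $N\dot{\vec m}_\sigma=-\frac M3\vec\delta_\sigma$, by letting the Gaussian factor absorb $\|\veta\|$; the differences are minor (the paper uses an AM--GM split into a $\tau_\sigma$-dependent constant plus $\|\vec\delta_\sigma\|^2$ where you use $\sup_x xe^{-\tau x^2/2}$ with $\|\vec\delta_\sigma\|\le\Delta_p^{1/2}$, and your minimum-principle lower bound on $\tau_\sigma(s)$ from \eqref{eq:IVPT} makes explicit a uniformity the paper leaves implicit).
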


\begin{proof}Let $h:\R^3\mapsto \R$ with $\hat 
h(0)=0$ and $\nabla \hat h(0)=0$. Then we have
\[
\begin{aligned}
\left|R[\hat h\widehat G^1_{\sigma,s}](\vxi,\veta)-R[\hat h\widehat G^1_{\sigma,s}](\vxi,0)\widehat 
G^1_{\sigma,s}(\veta)\right|=&
\left|\int_{\mathds S^2} \hat h(\vxi^*)\left(\widehat G^1_{\sigma,s}(\veta^*)-
\widehat G^1_{\sigma,s}(0^*)\widehat G^1_{\sigma,s}(\veta)\right)d\omega\right|\leq \\
&2 \int_{\mathds S^2}|\hat h(\vxi^*)|\leq \frac 23 d_2(h,0)\|\vxi\|^2
\end{aligned}
\]
while, using \eqref{eq:IVPM1}, we get
\[
\begin{aligned}
|\widehat G^1_{\sigma,s}(\veta)\langle \veta,\vec m_S(s)-\vec m_\sigma(s)\rangle \hat h(\vxi)|
\leq &
\widehat \Gamma_{\tau_\sigma(s)}(\veta)\|\veta\|\|\vec m_S(s)-\vec m_\sigma(s)\||\hat h(\vxi)|\leq \\
&(C\tau_\sigma(s)+\|\vec m_S(s)-\vec m_\sigma(s)\|^2)d_2(h,0)\|\vxi\|^2\, .
\end{aligned}
\]
The thesis now follows.

\end{proof}

In order to control $(I)$ and $(II)$, it will also be necessary for us to see that $\Phi_{t,s}$ is a contraction in 
$d_2$, which is the content of the following proposition, which is itself based on the analogous argument for the 
thermostat defined in \cite{BLP25}.

\begin{Prop}\label{Prop:Contraction}
For any $s \leq t$ and for any $f, g: \R^{3M} \to \R$ such that $\hat f (0) = \hat g(0)$ and $\nabla \hat f(0) = \nabla 
\hat g(0)$, we have
\[
d_2(\Phi_{t,s} f, \Phi_{t,s} g) \leq e^{-\frac 13(t - s)} d_2(f,g).
\]
\end{Prop}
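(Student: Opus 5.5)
We will work with the difference $w_t:=\Phi_{t,s}f-\Phi_{t,s}g$. It satisfies $\frac{d}{dt}w_t=\Psi(t)w_t$ with $\hat w_s(0)=0$ and $\nabla\hat w_s(0)=0$. By \eqref{eq:oldPsi}, $\Psi(t)=\cL_S+\cL_{B(t)}$ preserves mass. The thermostat part also acts linearly on first moments, through a term involving only $\int\vv_k w$ and not the thermostat velocities. Consequently the conditions $\hat w_t(0)=0$ and $\nabla\hat w_t(0)=0$ are propagated for all $t\ge s$, and $d_2(\Phi_{t,s}f,\Phi_{t,s}g)=\sup_{\uxi\neq0}|\hat w_t(\uxi)|/\|\uxi\|^2$ is finite.

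We then pass to Fourier space. There $\cL_S$ becomes $\frac{\lambda_S}{M-1}\sum_{i<j}(\hat R_{i,j}-\Id)$, where $\hat R_{i,j}$ averages over $\omega$ the evaluation at $\uxi_{i,j}(\omega)$. The thermostat term becomes $2\sum_k(\hat B_k(t)-\Id)$, where
\[
\hat B(t)[\hat g](\vxi)=\int_{\mathds S^2}\hat{\mathcal M}_t(\vxi-\vxi^*(\omega))\,\hat g(\vxi^*(\omega))\,d\omega,\qquad \vxi^*(\omega)=\vxi-\scal{\vxi,\omega}\omega,
\]
and $\hat{\mathcal M}_t$ has modulus at most $1$. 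The key point is that the time-dependent thermostat enters only through the factor $\hat{\mathcal M}_t$, of modulus $\le1$, which multiplies $\hat w$ evaluated at the shorter vector $\vxi^*(\omega)$.

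Setting $\Delta(t):=\sup_{\uxi\neq0}|\hat w_t(\uxi)|/\|\uxi\|^2$, we estimate the two parts as follows.
\begin{itemize}
\item Since collisions preserve $\|\uxi\|$, we get $|\hat R_{i,j}\hat w_t(\uxi)|\le\Delta(t)\|\uxi\|^2$. The system part therefore cannot increase $\Delta$.
\item For the thermostat acting on $\vxi_k$, $\|\vxi_k^*\|^2=\|\vxi_k\|^2(1-\scal{\hat\xi_k,\omega}^2)$, and the spherical average is $\frac23\|\vxi_k\|^2$. Hence $|\hat B_k(t)\hat w_t(\uxi)|\le\Delta(t)\bigl(\|\uxi\|^2-\frac13\|\vxi_k\|^2\bigr)$.
\end{itemize}
Writing the evolution in Duhamel/integrating-factor form with the diagonal loss $-(\frac{\lambda_S M}{2}+2M)\hat w_t$, we obtain a Grönwall-type inequality. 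Equivalently, we can use the standard Kac argument: take $\sup_{\uxi}$ at fixed $\uxi$ of $|\hat w_t(\uxi)|/\|\uxi\|^2$ and compute the upper Dini derivative of $e^{ct}\Delta(t)$.

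The gain term from the thermostats contributes $2\sum_k\Delta(t)\bigl(\|\uxi\|^2-\frac13\|\vxi_k\|^2\bigr)/\|\uxi\|^2$. The loss term is $2M$. Their net effect is
\[
-\tfrac23\sum_k\|\vxi_k\|^2/\|\uxi\|^2=-\tfrac23,
\]
which yields $\Delta(t)\le e^{-\frac23(t-s)}\Delta(s)$. This is even better than the claimed $e^{-\frac13(t-s)}$; if my accounting of the factor $2$ in \eqref{eq:oldPsi} is off (for example, if each thermostat contributes rate $1$ with a factor $\frac12$ in $\mathcal M_t$), the rate degrades to $\frac13$, which matches the statement.

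The main obstacle is making the sup argument rigorous. We would write
\[
\hat w_t=e^{-c(t-s)}\hat w_s+\int_s^t e^{-c(t-r)}\,\mathcal G(r)\hat w_r\,dr,
\]
with $c$ the total loss rate and $\mathcal G(r)$ the gain operator. We would then verify that $\mathcal G(r)$ maps the weighted space $\{\sup_{\uxi}|\hat w(\uxi)|/\|\uxi\|^2<\infty\}$ into itself with norm at most $c-\frac13$, uniformly in $r$. This needs only $|\hat{\mathcal M}_r|\le1$, so the time dependence of $\vec m_\sigma(r)$ and $\tau_\sigma(r)$ is harmless. Grönwall then closes the estimate. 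A secondary point to check is that the time-dependent generator is bounded, so that $\Phi_{t,s}$ exists and Duhamel applies, as asserted after \eqref{eq:evolther}.
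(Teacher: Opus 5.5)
Your proposal is correct and takes essentially the paper's route: the paper likewise bounds the shifted generator $\Psi(t)+\Lambda\,\Id$ by $\Lambda-\frac13$ in $d_2$ (citing Lemma A1 of \cite{BLP25} rather than computing the gain estimate directly) and closes with a first-order-in-time differential inequality and Gr\"onwall, of which your integrating-factor Duhamel version is a slightly more careful form. Your factor-$2$ accounting is in fact right, since each system particle meets each thermostat at unit rate, consistent with the $\frac23$ relaxation rate of $\vec m_S$ in \eqref{eq:IVPM1}; you therefore get the sharper rate $e^{-\frac23(t-s)}$, which implies the stated $e^{-\frac13(t-s)}$.
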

\begin{proof}
From \eqref{eq:oldPsi} together with Lemma A1 and equation (75) in \cite{BLP25} we get that for any $t>0$
\[
d_2((\Psi(t) + \Lambda\, \Id) f, (\Psi(t) + \Lambda\,  \Id) g) \leq \left(\Lambda -\frac13\right)d_2(f,g)
\]
where $\Lambda=M(2+\lambda_S)+2N\lambda_R$.
Moreover, it is easily verified that, for $\tau$ small and $t \geq s$,
\[
\Phi_{t+ \tau,s} f - \Phi_{t + \tau,s} g = \tau\left(\Psi(t) + \Lambda\,  \Id\right)  \Phi_{t,s}(f - g) + (1 - \Lambda
\tau + O(\tau^2)) \Phi_{t,s}(f - g) .
\]
Thus we have, for $\tau$ small enough and $t > s$
\begin{align*}
\frac{d}{dt} d_2( \Phi_{t,s}f, \Phi_{t,s}g) &\leq d_2(\left(\Psi(t) + \Lambda\,  \Id\right) \Phi_{t,s}f, \left(\Psi(t) 
+ 
\Lambda\,  \Id\right) \Phi_{t,s}g) - \Lambda d_2(\Phi_{t,s}f, \Phi_{t,s}g) \\
&\leq -\frac13 d_2(\Phi_{t,s}f, \Phi_{t,s}g).
\end{align*}
The conclusion now follows via Gr\"onwall's lemma.
\end{proof}

 Combining
Proposition \ref{Prop:Contraction}, Proposition \ref{Prop:H} and \eqref{eq:centered} immediately yields
\[
(I) \leq C M (1+\Delta_p)e^{-\frac s3} d_2(f_0, h_0)=C M (1+\Delta_p)e^{-\frac s3} d_2(\tilde f_0, 
\Gamma^M_{\tau_\infty})\, 
.
\]

Next, we consider $(II)$. We first apply Proposition \ref{Prop:H} to obtain
\begin{equation}\label{eq:IIH}
(II) \leq C M \left(1 + \Delta_p\right) d_2(h_s, \Gamma^M_{\infty,s}).
\end{equation}
We will now use the Duhamel formula to estimate $d_2(h_s, f_{\infty,s})$. This will require an understanding of 
$\frac{d}{ds} \Phi_{t,s}$. 

\begin{Prop}
For any $t \geq s$ we have
\begin{equation}
\label{eq:ds}
\frac{d}{ds} \Phi_{t,s} = -\Phi_{t,s} \Psi(s).
\end{equation}
\end{Prop}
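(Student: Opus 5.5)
The plan is to derive \eqref{eq:ds} from the defining equation \eqref{eq:2times} together with the composition (cocycle) property of the propagator. First I establish $\Phi_{t,r}\Phi_{r,s}=\Phi_{t,s}$ for $s\le r\le t$. Fix $r$ and $s$. Both $t\mapsto \Phi_{t,r}\Phi_{r,s}$ and $t\mapsto\Phi_{t,s}$ solve $\frac{d}{dt}X_t=\Psi(t)X_t$ on $[r,\infty)$ and agree at $t=r$. Since $\Psi(t)$ is a bounded operator, with norm bounded by $\Lambda$ uniformly in $t$ (see the proof of Proposition \ref{Prop:Contraction}), this linear ODE in the Banach space of bounded operators has a unique solution. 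Strictly, this requires $t\mapsto\Psi(t)$ to be continuous, or at least measurable and bounded, which holds because $\vec m_\sigma(t)$ and $\tau_\sigma(t)$ are smooth. Hence $\Phi_{t,r}\Phi_{r,s}=\Phi_{t,s}$. The same uniqueness argument with $\Psi$ replaced by $-\Psi$ shows that each $\Phi_{t,s}$ is invertible, with inverse given by solving backwards in time.

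Next I compute the derivative in $s$ directly. For $h>0$ small, the cocycle property gives $\Phi_{t,s}=\Phi_{t,s+h}\Phi_{s+h,s}$. Therefore
\[
\Phi_{t,s+h}-\Phi_{t,s}=\Phi_{t,s+h}\left(\mathrm{Id}-\Phi_{s+h,s}\right).
\]
From \eqref{eq:2times} we have $\Phi_{s+h,s}=\mathrm{Id}+\int_s^{s+h}\Psi(r)\Phi_{r,s}\,dr$, so $\Phi_{s+h,s}=\mathrm{Id}+h\Psi(s)+o(h)$ in operator norm. This uses the continuity of $\Psi$ and the bound $\|\Phi_{r,s}-\mathrm{Id}\|\le e^{\Lambda(r-s)}-1$, which follows from Gr\"onwall.

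Dividing by $h$ gives
\[
\frac{\Phi_{t,s+h}-\Phi_{t,s}}{h}=-\Phi_{t,s+h}\Psi(s)+o(1).
\]
Since $\|\Phi_{t,s+h}\|\le e^{\Lambda (t-s)}$ is uniformly bounded, and $s\mapsto\Phi_{t,s}$ is continuous (from the same identity), the right-hand side tends to $-\Phi_{t,s}\Psi(s)$. The left derivative is handled the same way, writing $\Phi_{t,s-h}=\Phi_{t,s}\Phi_{s,s-h}$. Together these give \eqref{eq:ds}.

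The only delicate point is justifying the uniform boundedness and the continuity of $\Psi(t)$ in operator norm on $L^1$ (or on the relevant space of distributions). From \eqref{eq:oldPsi}, $\Psi(t)=\cL_S+\cL_{B(t)}$, where $B(t)$ depends on $t$ only through the Maxwellian mixture $\mathcal M_t$. The map $t\mapsto\mathcal M_t$ is continuous in $L^1(\R^3)$ because $\vec m_\sigma(t)$ and $\tau_\sigma(t)>0$ depend continuously on $t$. Moreover, $\|B(t)[f]-B(t')[f]\|_1\le \|\mathcal M_t-\mathcal M_{t'}\|_1\|f\|_1$. This yields the needed continuity, and the rest of the argument is routine.
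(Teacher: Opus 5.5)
Your proposal is correct and follows the paper's route: it uses the identity $\Phi_{t,s+h}-\Phi_{t,s}=\Phi_{t,s+h}(\mathrm{Id}-\Phi_{s+h,s})$, which comes from the composition property, then divides by $h$ and lets $h\to 0$. You also supply details the paper leaves implicit: the cocycle property via uniqueness, norm-continuity of $\Psi(t)$ via $\|\mathcal M_t-\mathcal M_{t'}\|_1$, the expansion $\Phi_{s+h,s}=\mathrm{Id}+h\Psi(s)+o(h)$, and the left derivative. One small note: the constant $\Lambda$ you borrow from Proposition~\ref{Prop:Contraction} is not literally an $L^1$ operator-norm bound, but any uniform bound such as $\|\Psi(t)\|_{L^1\to L^1}\le M(4+\lambda_S)$ does the same job.
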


\begin{proof}
From the group properties of $\Phi$ we get, for $\tau$ small,
\[
\Phi_{t,s+s'}-\Phi_{t,s}=\Phi_{t,s+s'}(\Id - \Phi_{s+s',s}).
\] 
Dividing by $s'$ and taking the limit for $s'\to 0$ obtain \eqref{eq:ds}.
\end{proof}

With \eqref{eq:ds} in hand, we can apply the Duhamel formula and appeal to Proposition \ref{Prop:Contraction} to obtain
\[
\begin{aligned}
d_2(h_s, \Gamma^M_{\infty,s}) &\leq \int_0^s d_2\left(\Phi_{s,s'} \Psi(s') \Gamma^M_{\infty,s'},
\Phi_{s,s'}\frac{d}{ds'} \Gamma^M_{\infty,s'}\right) \ ds'\\
&\leq C\int_0^s  e^{-\frac 13(s - s')}d_2\left(\Psi(s') \Gamma^M_{\infty,s'},
\frac{d}{ds'} \Gamma^M_{\infty,s'}\right)ds' \, .
\end{aligned}
\]
We now turn our attention to estimating this term.

\begin{Prop}\label{Prop:tedious}
For all $s' \geq 0$ we have 
\begin{equation*} 
d_2\left(\Psi(s') \Gamma^M_{\infty,s'},
\frac{d}{ds'} \Gamma^M_{\infty,s'}\right) \leq \frac 1 6  \sum_{\sigma\in\{+,-\}}\big(|\tau_\sigma(s') - \tau_\infty| + 
2\|\vec\delta_\sigma(s')\|^2\big)
\end{equation*} 
\end{Prop}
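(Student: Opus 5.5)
Both sides are explicit in Fourier space, so the plan is to write them out there and compare them directly.

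First, Fourier transform both sides. We have $\widehat\Gamma^M_{\infty,s}(\uxi)=\prod_k e^{2\pi i\langle\vxi_k,\vec m_S(s)\rangle}e^{-\frac{\tau_\infty}{2}\|\vxi_k\|^2}$ (up to the sign convention used in the paper). Hence $\frac{d}{ds}\widehat\Gamma^M_{\infty,s}=2\pi i\langle\uxi,\dot{\underline m}_S(s)\rangle\widehat\Gamma^M_{\infty,s}$, because $\tau_\infty$ is constant. By \eqref{eq:IVPM1}, $\dot{\vec m}_S=\frac13(\vec\delta_++\vec\delta_-)$.

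For $\Psi(s)$ we use the representation \eqref{eq:oldPsi}. The term $\cL_S$ annihilates $\Gamma^M_{\infty,s}$: a Maxwellian shifted by a common velocity is invariant under momentum- and energy-conserving collisions. It then suffices to compute $B_k(s)$ acting on a one-particle factor $\Gamma_{\tau_\infty}(\cdot-\vec m_S)$, with all other factors untouched.

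The key computation is $\widehat{B_\sigma}$ on a shifted Gaussian. In Fourier space, $R$ acts on $\widehat\Gamma_{\tau_\sigma}(\cdot;\vec m_\sigma)(\veta)\,\widehat\Gamma_{\tau_\infty}(\cdot;\vec m_S)(\vxi)$ evaluated at $\veta=0$. This gives
\[
\int_{\mathds S^2}\exp\Big(2\pi i\big(\langle\vxi-\langle\vxi,\omega\rangle\omega,\vec m_S\rangle+\langle\vxi,\omega\rangle\langle\omega,\vec m_\sigma\rangle\big)-\tfrac{\tau_\infty}{2}\|\vxi-\langle\vxi,\omega\rangle\omega\|^2-\tfrac{\tau_\sigma}{2}\langle\vxi,\omega\rangle^2\Big)d\omega .
\]
Factor out $\widehat\Gamma^1_{\infty,s}(\vxi)$. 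The integrand then becomes $\exp\big(2\pi i\langle\vxi,\omega\rangle\langle\omega,\vec\delta_\sigma\rangle-\frac{\tau_\sigma-\tau_\infty}{2}\langle\vxi,\omega\rangle^2\big)$. Subtracting $1$ and the linear term, and using $|e^{z}-1-z|\le \frac{|z|^2}{2}e^{\max(\Re z,0)}$ together with $|e^{-a}-1|\le|a|$, the remainder is bounded by
\[
C\|\vxi\|^2\big(|\tau_\sigma-\tau_\infty|+\|\vec\delta_\sigma\|^2\big).
\]
The linear part averages over $\omega$ to $\frac{2\pi i}{3}\langle\vxi,\vec\delta_\sigma\rangle$, using $\int\omega\omega^T d\omega=\frac13\Id$. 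The second-order term in $\tau$ averages to $-\frac{\tau_\sigma-\tau_\infty}{6}\|\vxi\|^2$.

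Summing over $k$ and $\sigma$ with the weight $1$ (from $2\cdot\frac12$ in $\mathcal M_t$), the linear parts give exactly $2\pi i\langle\uxi,\frac13\sum_\sigma\underline\delta_\sigma\rangle\widehat\Gamma^M_{\infty,s}=\frac{d}{ds}\widehat\Gamma^M_{\infty,s}$. This cancellation is what makes $d_2$ finite and explains the choice of the dynamic thermostat. What remains is $\sum_k\sum_\sigma\widehat\Gamma^M_{\infty,s}(\uxi)\,r_\sigma(\vxi_k)$, with $|r_\sigma(\vxi)|\le c\|\vxi\|^2(|\tau_\sigma-\tau_\infty|+2\|\vec\delta_\sigma\|^2)$. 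Since $|\widehat\Gamma^M_{\infty,s}|\le1$, dividing by $\|\uxi\|^2=\sum_k\|\vxi_k\|^2$ gives the bound.

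The main obstacle is obtaining the constant $\frac16$ in front of $|\tau_\sigma-\tau_\infty|$ and the factor $2$ in front of $\|\vec\delta_\sigma\|^2$. This requires a careful second-order Taylor estimate with sharp spherical averages, $\int\langle\vxi,\omega\rangle^2d\omega=\frac13\|\vxi\|^2$ and $\int\langle\vxi,\omega\rangle^2\langle\omega,\vec\delta\rangle^2d\omega\le\frac13\|\vxi\|^2\|\vec\delta\|^2$. The estimate must also not lose factors from the imaginary phase: write $|e^{ia-b}-1-ia|\le |e^{-b}-1|+|e^{ia}-1-ia|\le b+\frac{a^2}{2}$ for $b\ge0$, and treat the case $\tau_\sigma<\tau_\infty$ separately using a Gaussian bound on the whole product. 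I would first attempt that elementary inequality and track the constants; if the exact constants do not come out, a generic $C$ still suffices for the subsequent use.
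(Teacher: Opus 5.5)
Your proposal is correct and is essentially the paper's proof. The inequality in your last paragraph, $|e^{ia}e^{-b}-1-ia|\le|e^{-b}-1|+|e^{ia}-1-ia|$ with $b=\frac{\tau_\sigma-\tau_\infty}{2}\langle\vxi,\omega\rangle^2$, is the paper's decomposition once you multiply back by $\widehat\Gamma^1_{\infty,s}(\vxi)$. That decomposition is $\widehat\Gamma_{\tau_\sigma}(0^*)=\bigl(\widehat\Gamma_{\tau_\sigma}(0^*)-\widehat\Gamma_{\infty}(0^*)\bigr)+\widehat\Gamma_{\infty}(0^*)$, followed by cancelling the linear phase against $\dot{\vec m}_S=\frac13(\vec\delta_++\vec\delta_-)$.

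Two steps need tightening.

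\begin{itemize}
\item Drop the first route through $\frac{|z|^2}{2}e^{\max(\mathrm{Re}\,z,0)}$. The $b^2$ term grows like $\|\vxi\|^4$, so it cannot give a bound of order $\|\vxi\|^2|\tau_\sigma-\tau_\infty|$.
\item Do not bound $|\widehat\Gamma^M_{\infty,s}|\le 1$ before estimating $r_\sigma$. For $\tau_\sigma<\tau_\infty$, which genuinely occurs, $r_\sigma$ grows exponentially. Instead keep the factor $\widehat\Gamma^1_{\infty,s}(\vxi_k)$ attached to $r_\sigma(\vxi_k)$, as the paper does. Then
\[
\widehat\Gamma_\infty(\vxi)\,(e^{-b}-1)=\widehat\Gamma_\infty(\vxi^*)\bigl(\widehat\Gamma_{\tau_\sigma}(0^*)-\widehat\Gamma_\infty(0^*)\bigr),
\]
whose modulus is at most $\frac12|\tau_\sigma-\tau_\infty|\langle\vxi,\omega\rangle^2$ for either sign. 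This gives the $\frac16$ with no case split.
\end{itemize}

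The shear term is then at most $\frac12\int\langle\vxi,\omega\rangle^2\langle\omega,\vec\delta_\sigma\rangle^2d\omega\le\frac1{10}\|\vxi\|^2\|\vec\delta_\sigma\|^2$. So the stated constants hold in the convention $\hat f(\vxi)=\int e^{i\langle\vxi,\vv\rangle}f$ that the paper's proof actually uses. With a $2\pi$ in the phase but not in the Gaussian, as you wrote it, this term would pick up a factor $4\pi^2$ and exceed the stated bound.
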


\begin{proof}
We begin by observing that
\begin{align*}
\widehat\Psi(s') \widehat\Gamma^M_{\infty,s'}(\uxi) - &\widehat{\frac{d}{ds'}\Gamma^M_{\infty,s'}}(\uxi) =\\
& \sum_{k=1}^M \Gamma_{\infty,s'}^{M - 1}(\uxi^k) 
\bigg(\int_{\mathds S^2}\left(e^{i\langle\vec \delta_+(s'),\vec 0^*_k\rangle}\widehat \Gamma_{\tau_+(s')}(\vec
0^*_k)+e^{i\langle\vec \delta_-(s'),\vec 0^*_k\rangle}\widehat \Gamma_{\tau_-(s')}(\vec
0^*_k)\right)\widehat \Gamma_\infty(\vxi_k^*) -
\\ 
&\quad 2\widehat \Gamma_\infty(\vxi_k) - i\langle \dot{\vec m}_S(s'), \vxi_k\rangle \Gamma_\infty(\vxi_k)\bigg) =: 
\sum_{k=1}^M \widehat\Gamma_{\infty,s'}^{M - 1}(\uxi^k) D_2(\vxi_k)
\end{align*}
where $0^*_k=\langle\omega,\vxi_k\rangle\omega$.
It will suffices to understand $D_2(\vxi)$.
We begin by writing $\widehat \Gamma_{\tau_\sigma(s')}(0^*) = \widehat \Gamma_{\tau_\sigma(s')}(0^*) - \widehat 
\Gamma_{\infty}(0^*) + \widehat \Gamma_{\infty}(0^*)$ to obtain
\begin{align*}
|D_2(\vxi)|&= \bigg|\int_{\mathds S^2} e^{i\langle\vec \delta_+(s'),\vec 0^*\rangle}( \widehat 
\Gamma_{\tau_+(s')}(\vec 0^*)  - \widehat \Gamma_{\infty}(0^*))
\widehat\Gamma_{\infty}(\vxi^*) \, d\omega 
 + 
\int_{\mathds S^2} e^{i\langle\vec \delta_+(s'),\vec 0^*\rangle}\widehat \Gamma_{\infty}(\vxi) \, d\omega
\\
&\quad +
\int_{\mathds S^2} e^{i\langle\vec \delta_-(s'),\vec 0^*\rangle}( \widehat \Gamma_{\tau_-(s')}(\vec 0^*)  - \widehat 
\Gamma_{\infty}(0^*))
\widehat\Gamma_{\infty}(\vxi^*) \, d\omega 
 + 
\int_{\mathds S^2} e^{i\langle\vec \delta_-(s'),\vec 0^*\rangle}\widehat \Gamma_{\infty}(\vxi) \, d\omega
\\
&\quad -
2\widehat \Gamma_\infty(\vxi) - i\langle \dot{m}_S(s'), \vxi\rangle \widehat\Gamma_\infty(\vxi) \bigg| 
\\
&\leq 
\int_{\mathds S^2} \left| \widehat \Gamma_{\tau_+(s')}(\vec 0^*)  - \widehat \Gamma_{\infty}(0^*)\right|
 \, d\omega + \int_{\mathds S^2} \left| \widehat \Gamma_{\tau_-(s')}(\vec 0^*)  - \widehat \Gamma_{\infty}(0^*)\right|
 \, d\omega
 \\
 &\quad + 
 \left|\int_{\mathds S^2} e^{i\langle\vec \delta_+(s'),\vec 0^*\rangle} + e^{i\langle\vec \delta_-(s'),\vec 0^*\rangle} 
 \, d\omega - 2 - i\langle \dot{m}_S(s'), \vxi\rangle \right|
\end{align*}
Using that $\int_{\mathds S^2}\|0^*\|^2d\omega=\frac13\|\vxi\|^2$, the first term in this estimate satisfies
\[
 \int_{\mathds S^2} \left| \widehat \Gamma_{\tau_\sigma(s')}(\vec 0^*)  - \widehat \Gamma_{\infty}(0^*)\right|
 \, d\omega \leq \frac 1 3 d_2(\Gamma_{\tau_\sigma(s')}, \Gamma_\infty)\|\vxi\|^2 \leq \frac 1 6 |\tau_\sigma(s') - 
 \tau_\infty|\|\vxi\|^2 \, .
\]
We can control the second term by expanding $e^{i\scal{\vec \delta_\sigma(s'),\vec 0^*}}$ using 
\begin{equation*}
e^{x} = 1 + x + x^2\int_0^1 e^{tx}(1 - t) \ dt
\end{equation*}
and using \eqref{eq:IVPM1} to obtain
\begin{align*}
\bigg|\int_{\mathds
S^2} &e^{i\langle\vec \delta_+(s'),\vec 0^*\rangle} + e^{i\langle\vec \delta_-(s'),\vec 0^*\rangle}
-2 - i\langle \dot{m_S}, \xi\rangle \bigg|
\\
&=
\bigg|
\int_{\mathds S^2} \langle\vec \delta_+(s'),\vec 0^*\rangle^2 \int_0^1 e^{i\langle\vec \delta_+(s'),\vec 0^*\rangle t} 
(1 - 
t)\, dt \, d\omega + 
\int_{\mathds S^2} \langle\vec \delta_-(s'),\vec 0^*\rangle^2\int_0^1 e^{i\langle\vec \delta_-(s'),\vec 0^*\rangle t} 
(1 - t)\, 
dt \, d\omega \bigg|
\\
&\leq 
\int_{S^2} \langle\vec \delta_+(s'),\vec 0^*\rangle^2 + \langle\vec \delta_-(s'),\vec 0^*\rangle^2 \, d\omega 
\leq
\frac 1 3 \left(\|\vec \delta_+(s')\|^2  + \|\vec \delta_-(s')\|^2\right) \|\xi\|^2 
\end{align*}

Combining these estimates we finally get the thesis.
\end{proof}

Proposition \ref{Prop:tedious}, Proposition \ref{Prop:Delta} and \eqref{eq:IIH} now yield
\[
(II) \leq C M (1 + \Delta_p) (\Delta_p+\Delta_T)e^{-\frac {2Ms}{3N}}\, .
\]

It remains for us to understand $(III).$ 
\begin{Prop}\label{Prop:Theta}
For any $s \geq 0$ we have
\[
 (III)\leq
C M\sum_\sigma  \left(|\tau_\sigma(s) - \tau_\infty| + \|\vec \delta_\sigma(s')\|^2\right)
\]
\end{Prop}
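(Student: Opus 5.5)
The bound will follow from a direct computation of $H_s[\widehat\Gamma^M_{\infty,s}]$. The structure of $\Gamma^M_{\infty,s}$ makes the $\vec\xi$-dependence explicit, so what remains is a Taylor expansion in the thermostat mismatch.

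\textbf{Step 1: reduce to one system particle.} Since $\widehat\Gamma^M_{\infty,s}(\uxi)=\prod_{j}e^{-2\pi i\langle\vxi_j,\vec m_S(s)\rangle}\widehat\Gamma_{\tau_\infty}(\vxi_j)$ is a product, each collision term $R^{I_\sigma}_{k,j}$ in \eqref{eq:H} acts only on the pair $(\vxi_j,\veta_{\sigma,k})$. The other factors have modulus at most $1$. Hence $|H_s[\widehat\Gamma^M_{\infty,s}]|$ is bounded by $M$ times the supremum over $j$ of a two-particle expression $K_\sigma(\vxi,\veta)$, with the same term for the other thermostat. The drift term $2\pi i N\langle\veta_\sigma,\dot{\vec m}_\sigma\rangle$ equals $\frac{2\pi i M}{3}\langle\veta_\sigma,\vec m_S-\vec m_\sigma\rangle$ by \eqref{eq:IVPM1}, so it is also a sum of $M$ one-particle contributions. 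The factor $\widehat G^1_{\sigma',s}(\veta_{\sigma',k})$ has modulus at most $1$ and can be dropped. It therefore suffices to show
\[
|K_\sigma(\vxi,\veta)|\leq C\big(|\tau_\sigma(s)-\tau_\infty|+\|\vec\delta_\sigma(s)\|^2\big)(\|\vxi\|^2+\|\veta\|^2).
\]

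\textbf{Step 2: exact equilibrium case.} By the shift invariance \eqref{eq:shift}, I would move to the frame where $\vec m_S(s)=0$. Then the reservoir Maxwellian has mean $\vec\delta_\sigma(s)$ and temperature $\tau_\sigma(s)$. If $\vec\delta_\sigma=0$ and $\tau_\sigma=\tau_\infty$, the product $\widehat\Gamma_{\tau_\infty}(\vxi)\widehat\Gamma_{\tau_\infty}(\veta)$ is invariant under $R$. In that case
\[
R[\cdot](\vxi,\veta)-R[\cdot](\vxi,0)\widehat\Gamma(\veta)=\widehat\Gamma(\vxi)\widehat\Gamma(\veta)-\widehat\Gamma(\vxi)\widehat\Gamma(\veta)=0,
\]
and the drift term vanishes too. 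So $K_\sigma\equiv 0$ at equilibrium, and $K_\sigma$ is a difference between the actual and the equilibrium expression.

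\textbf{Step 3: Taylor expansion in the mismatch.} I would write $\widehat\Gamma_{\tau_\sigma}(\cdot)e^{-2\pi i\langle\cdot,\vec\delta_\sigma\rangle}$ as $\widehat\Gamma_{\tau_\infty}$ plus two corrections, as in the proof of Proposition \ref{Prop:tedious}.

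The temperature correction has modulus at most $\frac12|\tau_\sigma-\tau_\infty|\,\|\cdot\|^2$, because $d_2(\Gamma_{\tau_\sigma},\Gamma_{\tau_\infty})\le\frac12|\tau_\sigma-\tau_\infty|$. After the collision the arguments are $\veta^*$ and $0^*$, with $\|\veta^*\|^2\le 2(\|\vxi\|^2+\|\veta\|^2)$, so this piece is controlled directly.

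The velocity correction I would expand as $e^{x}=1+x+x^2\int_0^1e^{tx}(1-t)\,dt$. The quadratic remainder gives $\|\vec\delta_\sigma\|^2(\|\vxi\|^2+\|\veta\|^2)$. The linear term, averaged over $\omega$, combines $\langle\vec\delta_\sigma,\veta^*\rangle$ and $\langle\vec\delta_\sigma,0^*\rangle\widehat\Gamma(\veta)$. It should cancel against the drift term $\frac{2\pi i}{3}\langle\veta,\vec m_S-\vec m_\sigma\rangle$ up to second order, since $\int_{\mathds S^2}\veta^*\,d\omega$ reproduces exactly the $\frac13$ factor of \eqref{eq:IVPM1}. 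That cancellation is how the conditions $H(\uxi,0)=0$ and $\nabla_\eta H(0,0)=0$ arose.

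\textbf{Main obstacle.} The hard part is this first-order cancellation. The linear term is only of size $\|\vec\delta_\sigma\|\,\|\eta\|$, which is not quadratic in either $\|\vec\delta_\sigma\|$ or $\|\eta\|$. Its leftover pieces are of the form
\[
\langle\vec\delta_\sigma,\cdot\rangle\,\big(\widehat\Gamma(\vxi^*)-\widehat\Gamma(\vxi)\big)\quad\text{or}\quad \langle\vec\delta_\sigma,\cdot\rangle\,\big(\widehat\Gamma(\veta)-1\big).
\]
Each of these gains one more factor of $\|\vxi\|$ or $\|\veta\|$. Using $ab\le a^2+b^2$ together with the Gaussian decay of $\widehat\Gamma$ for large arguments, I expect each to be bounded by $C(\|\vec\delta_\sigma\|^2+\tau_\infty)(\|\vxi\|^2+\|\veta\|^2)$. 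The $\tau_\infty$ dependence is absorbed into $C$, as in Proposition \ref{Prop:H}. If some term proves linear rather than quadratic in $\vec\delta_\sigma$, I would accept a bound of the form $C(\|\vec\delta_\sigma\|+\|\vec\delta_\sigma\|^2)$. Proposition \ref{Prop:Delta} would then still give exponential decay in $s$.

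Summing over $j$ and $\sigma$ then yields the stated bound.
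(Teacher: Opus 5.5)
Your outline is the route the paper takes: reduce to one system/reservoir pair, split off the temperature mismatch using $d_2(\Gamma_{\tau_\sigma},\Gamma_{\tau_\infty})\le\frac12|\tau_\sigma-\tau_\infty|$, then Taylor-expand the phase to second order and cancel the linear term against the drift. However, the step you flag as the main obstacle is left unresolved, and your fallback does not prove the statement.

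A bound $C(\|\vec\delta_\sigma\|^2+\tau_\infty)(\|\vxi\|^2+\|\veta\|^2)$ with $\tau_\infty$ absorbed into $C$ is just $C(1+\|\vec\delta_\sigma\|^2)(\|\vxi\|^2+\|\veta\|^2)$. It does not vanish when $\vec\delta_\sigma=0$ and $\tau_\sigma=\tau_\infty$, so it gives neither the proposition nor the decay in $s$ needed in \eqref{eq:slow}. The analogy with Proposition \ref{Prop:H} is misleading: there the $\tau_\sigma$-term multiplies the small factor $d_2(g,0)$, whereas here nothing small multiplies it. A bound linear in $\|\vec\delta_\sigma\|$ is likewise strictly weaker than what is claimed. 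AM--GM can only dispose of a term linear in $\vec\delta_\sigma$ if its companion factor is itself controlled by the mismatch. Pieces like $\widehat\Gamma(\vxi^*)-\widehat\Gamma(\vxi)$ or $\widehat\Gamma(\veta)-1$ are not, so you must show they never occur.

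They do not occur. Once both Gaussians are at temperature $\tau_\infty$, the identities $\widehat\Gamma_{\tau_\infty}(\vxi^*)\widehat\Gamma_{\tau_\infty}(\veta^*)=\widehat\Gamma_{\tau_\infty}(\vxi)\widehat\Gamma_{\tau_\infty}(\veta)$ and $\widehat\Gamma_{\tau_\infty}(\vxi^*)\widehat\Gamma_{\tau_\infty}(0^*)=\widehat\Gamma_{\tau_\infty}(\vxi)$ hold for every $\omega$. So the Gaussians come out of the $\omega$-integrals \emph{before} any expansion; you used this invariance in Step 2 but not in Step 3. You also need $\vxi^*+\veta^*=\vxi+\veta$, which is what actually removes $\vec m_S$; \eqref{eq:shift} is a statement about $d_2$, not about $H$. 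With the convention $\widehat G_{\sigma,s}(\veta)=e^{i\langle\veta,\vec m_\sigma\rangle}\widehat\Gamma_{\tau_\sigma}(\veta)$ (factors $2\pi$ absorbed), the equal-temperature piece becomes a unimodular factor times $\widehat\Gamma_{\tau_\infty}(\vxi)\widehat\Gamma_{\tau_\infty}(\veta)$ times
\[
\int_{\mathds S^2}\Big(e^{i\langle\veta^*-\veta,\vec\delta_\sigma\rangle}-e^{i\langle 0^*,\vec\delta_\sigma\rangle}\Big)d\omega+\frac i3\langle\veta,\vec\delta_\sigma\rangle .
\]
Since $\veta^*-\veta-0^*=-\langle\veta,\omega\rangle\omega$, the first-order part of the integral is exactly $-\frac i3\langle\veta,\vec\delta_\sigma\rangle$. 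It cancels the drift identically, for all $(\vxi,\veta)$. By $|e^{ix}-1-ix|\le x^2/2$, the remainders are at most $\frac12\|\vec\delta_\sigma\|^2(\|\vxi\|^2+\|\veta\|^2)$.

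The only genuine linear-in-$\vec\delta_\sigma$ term is the cross term drift $\times$ temperature mismatch, $\frac13|\langle\veta,\vec\delta_\sigma\rangle|\,|\widehat\Gamma_{\tau_\sigma}(\veta)-\widehat\Gamma_{\tau_\infty}(\veta)|$, which your Step 3 does not mention. Here AM--GM does work, because $|\widehat\Gamma_{\tau_\sigma}-\widehat\Gamma_{\tau_\infty}|^2\le|\widehat\Gamma_{\tau_\sigma}-\widehat\Gamma_{\tau_\infty}|\le\frac12|\tau_\sigma-\tau_\infty|\,\|\veta\|^2$. With these two points the argument closes with the stated bound.
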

\begin{proof} The proof is very similar to that for Proposition \ref{Prop:tedious}. We refer readers to
Appendix~\ref{app:proofs} for full details.\end{proof}

Combining the estimates for $(I)$, $(II)$, and $(III)$, we obtain
\begin{equation}\label{eq:slow}
D_1\leq CM(1+\Delta_p) \left(e^{-\frac s3}d_2(\tilde f_0, h_0) +(\Delta_T+\Delta_p)e^{-\frac {2Ms}{3N}}\right)
\end{equation}
and thus, taking into account \eqref{eq:dottau} and Proposition \ref{Prop:Delta}, we finally get
\begin{equation}\label{eq:fullt2}
d_2(F_t, \overline F_t) \leq C\frac M{\sqrt N}(1+\Delta_p)E_4(f)^{\frac56}\left(d_2(\tilde f_0, 
\Gamma^M_{\tau_\infty})^{\frac 
16}\left(1-e^{-\frac t{18}}\right)+
\frac {9N}M\left(1-e^{-\frac {M}{9N}t}\right)(\Delta_p+\Delta_T)^{\frac 
16}\right)
\end{equation}
This concludes the proof of Theorem \ref{thm:2thermo}. 

To obtain Corollary \ref{coro:1thermo} we observe that, if $T_+=T_-=T$ and $\vec p_+=\vec p_-$ then 
$\tau_+(t)=\tau_-(t)=:\tau(t)$ and $\vec m_+(t)=\vec m_-(t)=:\vec m(t)$ for every $t\geq 0$. The IVP \eqref{eq:IVPM1} 
now reads
\[
\frac{d}{dt}\begin{pmatrix}
		 \vec m_S \\ \vec m
	\end{pmatrix}=
\frac 13\begin{pmatrix}
-2 & 2\\
\frac{M}{N} & -\frac MN
\end{pmatrix}
\begin{pmatrix}
	 \vec m_S \\ \vec m
	\end{pmatrix},
\qquad\qquad
\begin{pmatrix}
	 \vec m_S(0) \\ \vec m(0)
	\end{pmatrix}=
\begin{pmatrix}
	 \vec p_S \\ \vec p
	\end{pmatrix}
\]
from which we get
\[
\vec m(t)-\vec m_S(t)=e^{-\frac13\left(2+\frac MN\right)t}(\vec p-\vec p_S)\, .
\]
With a similar argument we get
\[
|\tau(s)-\tau_S(s)|\leq e^{-\frac13\left(2+\frac MN\right)t}\left(\Delta_T+\frac 13 \Delta_p\right)\, .
\]
Using again the estimates for $(I)$, $(II)$, and $(III)$ conclude the proof.

\section*{Declarations}

\subsection*{Conflict of Interest}
The authors have no conflicts of interest to disclose.

\subsection*{Author Contributions}
Federico Bonetto: Formal analysis (equal); Writing – original draft (equal); Writing – review \& editing (equal). \\
Matthew Powell: Formal analysis (equal); Writing – original draft (equal); Writing – review \& editing (equal).

\subsection*{Data Availability}
Data sharing is not applicable to this article as no new data were created or analyzed in this study.

\bibliographystyle{unsrt}

\appendix

\section{Estimating moments of $F_s$}\label{app:moments}

A key observation from \cite{BLP25} was that, in the standard static thermostat case, there is a constant $C$ 
independent of $M$ such that $\sup_s \|H_s[\hat{f}_s]\|_{3,1} \leq CM,$ which allowed us to obtain a uniform estimate 
in the one 
thermostat case and a `reasonable' short-time estimate in the two thermostat case. We will see, by a similar argument, 
that the same is true for the dynamic thermostats considered here.
 
\begin{Prop}\label{prop:H31}
 	There exists a constant $C$ such that $\sup_s \|H_s[\hat{f}_s]\|_{3,1} \leq CM E_4(f_0).$
\end{Prop}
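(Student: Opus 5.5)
We reduce the bound on $\|H_s[\hat f_s]\|_{3,1}$ to uniform-in-time bounds on moments of $f_s$ of order at most $4$. Derivatives in $\vec\xi$ and $\vec\eta$ of the Fourier-side terms in \eqref{eq:H} are controlled by moments: $|\partial^{\alpha}\hat f(\uxi)|\le (2\pi)^{|\alpha|}\int |\uv^{\alpha}| f\,d\uv$. Each term in the first line of \eqref{eq:H} is $R^{I_\sigma}_{k,j}$ applied to $\hat f_s\widehat G^1_{\sigma,s}$. The collision map $(\vxi,\veta)\mapsto(\vxi^*,\veta^*)$ is a linear isometry, and averaging over $\omega$ preserves sup norms of derivatives up to a combinatorial constant. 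So one derivative in $\xi$ and three in $\eta$ become at most four derivatives in the joint variables of $\hat f_s(\uxi)\widehat G^1_{\sigma,s}(\veta)$. Since we only need the sup over the two particles involved, this is bounded by
\[
C\Big(\sup_{|\alpha|\le 4}\int|\vv_j^{\alpha}|f_s\Big)\Big(1+\|\vec m_\sigma(s)\|+\tau_\sigma(s)\Big)^4 .
\]
The subtracted term $R(\cdot)(\uxi,0)G(\veta)$ is handled the same way. The second line of \eqref{eq:H} has the factor $N\dot{\vec m}_\sigma=\frac M3(\vec m_S-\vec m_\sigma)$ by \eqref{eq:IVPM1}, which gives order $M$ times a bounded function of $\Delta_p$. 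Summing over $j=1,\dots,M$ produces the factor $M$.

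The parameters $\vec m_\sigma(s)$, $\vec m_S(s)$ and $\tau_\sigma(s)$ are bounded uniformly in $s$ by the explicit solution \eqref{eq:IVPsol} and Proposition \ref{Prop:Delta}. These bounds depend only on $\vec p_\sigma,\vec p_S,T_\sigma,T_S$, which are absorbed into $C$. The main step is therefore
\[
\sup_s\sup_{|\alpha|\le4}\int|\vv_1^{\alpha}|f_s\,d\uv\le C\,E_4(f_0).
\]
By permutation symmetry (preserved by $\Psi(t)$) it suffices to treat one particle. By Hölder it also suffices to bound $\int\|\vv_1\|^4f_s$ and lower even moments, together with the cross moments $\int\|\vv_1\|^2\|\vv_2\|^2 f_s$ that arise when $\alpha$ splits between $\vv_j$ and another variable.

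We will write closed differential inequalities for these moments using \eqref{eq:oldPsi}. For a single-particle polynomial $P(\vv_1)$ of degree $\le 4$, the thermostat term $2(B_1(t)-\Id)$ contributes the following. Averaging over $\omega$ in $\vv^*=\vv-\langle\vv-\vu,\omega\rangle\omega$ shows that $\int\|\vv_1^*\|^4\,d\omega\,\mathcal M_t(\vu)d\vu$ equals $\theta\|\vv_1\|^4$ with $\theta<1$, plus lower-degree terms with coefficients polynomial in the moments of $\mathcal M_t$. Hence
\[
\frac{d}{dt}\int\|\vv_1\|^4f_t\le -c\int\|\vv_1\|^4 f_t + C\Big(1+\int\|\vv_1\|^2f_t+\int\|\vv_1\|^3f_t\Big).
\]
The internal term $\mathcal L_S$ mixes $\|\vv_1\|^4$ with $\|\vv_1\|^2\|\vv_2\|^2$ but conserves $\sum_i\|\vv_i\|^2$ and $\sum_i\vv_i$. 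So we consider the vector of symmetrized quantities $\big(\int\|\vv_1\|^4f_t,\ \int\|\vv_1\|^2\|\vv_2\|^2f_t\big)$. The $\mathcal L_S$ contribution acts on this vector through a matrix whose rows sum to at most zero and which is nonpositive on the relevant combination. The thermostat gives strict dissipation on both components, since each particle is independently thermalized. A Grönwall argument on this linear system, done first for degree $\le2$ moments and then degree $\le4$, gives bounds of the form $\max(E_4(f_0),C)\le C\,E_4(f_0)$, using $E_4(f_0)\ge c>0$ from $T_S>0$.

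We expect this moment propagation to be the main obstacle. One has to verify, uniformly in $M$, that the coefficients coming from $\mathcal L_S$ (of size $\lambda_S/(M-1)$ times $\binom M2$ pairs) do not destroy the dissipation. This holds because $\mathcal L_S$ conserves energy pairwise, so it only redistributes the fourth moments and enters with nonpositive net sign. The remaining difficulty is that the thermostat parameters are time dependent, but they are bounded uniformly, so the source terms are bounded as well.
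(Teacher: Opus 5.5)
Your reduction of $\|H_s[\hat f_s]\|_{3,1}$ to moments of order at most $4$ of $f_s$ is what the paper does. This includes the observation $N\dot{\vec m}_\sigma=\frac M3(\vec m_S-\vec m_\sigma)$ and the factor $M$ from the sum over $j$. The overall plan of propagating these moments uniformly in time also matches the paper.

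The gap is at the step you flag as the main one. Set $m_4=\int\|\vv_1\|^4f$ and $m_{22}=\int\|\vv_1\|^2\|\vv_2\|^2f$. Energy conservation does not make $\mathcal L_S$ enter with nonpositive sign on $m_4$. Nor does $\mathcal L_S$ act on $(m_4,m_{22})$ through a matrix with nonpositive row sums. From $\|\vv_1^*\|^2=\|\vv_1\|^2-\langle\vv_1,\omega\rangle^2+\langle\vv_2,\omega\rangle^2$ one gets
$\int_{\mathds S^2}\|\vv_1^*\|^4d\omega=\frac8{15}\|\vv_1\|^4+\frac15\|\vv_2\|^4+\frac8{15}\|\vv_1\|^2\|\vv_2\|^2-\frac4{15}\langle\vv_1,\vv_2\rangle^2$. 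Hence, for symmetric $f$,
\[
\int\|\vv_1\|^4\,\mathcal L_S[f]\,d\uv=\lambda_S\Big(-\tfrac4{15}\,m_4+\tfrac8{15}\,m_{22}-\tfrac4{15}\int\langle\vv_1,\vv_2\rangle^2f\,d\uv\Big).
\]
This shows three things:
\begin{itemize}
\item the pair $(m_4,m_{22})$ is not closed, because $\int\langle\vv_1,\vv_2\rangle^2f$ appears;
\item the $m_4$-row has positive sum;
\item the contribution can be positive. For $f$ a product of uniform measures on $\{\|\vv\|=r\}$ it equals $+\frac8{45}\lambda_Sr^4$.
\end{itemize}
By contrast, the thermostat contributes only $-\frac{14}{15}r^4$ at top degree, since the top-degree part of $B^\dagger(t)\|\vv\|^4$ is $\frac8{15}\|\vv\|^4$. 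So for $\lambda_S>21/4$ the leading part of $\frac{d}{dt}m_4$ is positive. Redistributing energy among particles can increase $\sum_i\|\vv_i\|^4$. Hence the argument ``the thermostat dissipates each component and $\mathcal L_S$ is nonpositive'' cannot close a Gr\"onwall bound uniformly in $M$ and $\lambda_S$.

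Your conservation idea does work if you use it differently.
\begin{itemize}
\item The quantity $Q:=m_4+(M-1)m_{22}=\frac1M\int(\sum_i\|\vv_i\|^2)^2f$ is exactly conserved by $\mathcal L_S$.
\item Let $\mu_2(t)=\int\|\vu\|^2\mathcal M_t(\vu)\,d\vu$ and $m_2=\int\|\vv_1\|^2f$. Since $B^\dagger(t)\|\vv\|^2=\frac23\|\vv\|^2+\frac13\mu_2(t)$, the thermostat part of $\frac{d}{dt}m_{22}$ is $-\frac43m_{22}+\frac43\mu_2(t)m_2$.
\item Hence $\dot Q\le-\frac{14}{15}Q+CM$, and $m_{22}(t)\le m_{22}(0)+\frac{m_4(0)}{M-1}+C$ for all $t$.
\item Drop the $\langle\vv_1,\vv_2\rangle^2$ term, which has the favorable sign. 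This gives $\dot m_4\le-\big(\frac{14}{15}+\frac{4\lambda_S}{15}\big)m_4+\frac{8\lambda_S}{15}m_{22}+C$.
\item Therefore $\sup_t m_4\le m_4(0)+2\sup_t m_{22}+C$, uniformly in $M$ and $\lambda_S$.
\end{itemize}
All other moments of order at most $4$ then follow by AM--GM and symmetry, so cross moments need no separate treatment. Also, $E_4(f_0)\ge1$ holds trivially from $\underline\alpha=0$, so you do not need $T_S>0$.

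With this repair your route is a legitimate and more explicit alternative to the paper's. The paper works on the whole space of symmetrized polynomials $e_{\underline n}$ of degree at most $4$. It uses that the adjoint generator is block upper triangular in the degree, with $t$-independent, essentially $M$-independent, negative definite diagonal blocks, and solves the triangular system recursively. Your version needs only scalar inequalities for nonnegative moments. The price is that it relies on the specific structure of $\|\vv\|^4$ and of the conserved quantity $(\sum_i\|\vv_i\|^2)^2$.
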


Reasoning like in Lemmas B2 and B3 in \cite{BLP25} we can show that
\[
\|H_s[\hat{f}_s]\|_{3,1}\leq CM E_4(f_s)
\]
We thus have to study the evolution of the moments of
$f_s$. To do this, we will define an adjoint evolution on the space of polynomials in $\uv$. The properties of this 
evolution will show that $E_4(f_s)\leq C E_4(f_0)$ that proof Proposition \ref{prop:H31}.

Given $f,g:\R^3\to\R$, it is easy to see that
\[
\int g(\vv)B(t)[f](\vv)d\vv=\int B^\dagger(t)[g](\vv)f(\vv)d\vv
\]
where $B(t)$ is given by \eqref{eq:BM} while $B^\dagger(t)$ is the adjoint of $B(t)$ and is given by
\[
B^\dagger(t)[g](\vv)=\int_{\R^3}\mathcal M_t(\vu)\int_{\mathds S^2}g(\vv-(\vu-\vv,\omega)\omega)d\omega d\vu.
\]
Observe that, if $p(\vv)$ is a polynomial of degree $K$, than $p(\vv-(\vu-\vv,\omega)\omega)$ can be seen as a
polynomial in $\vv$ and $\vu$ with coefficients depending on $\omega$. It follows that $B^\dagger(t)[p](\vv)$ is a
polynomials in $\vv$ of degree $\leq K$ whose coefficients depend on the moments of $\mathcal M(t)$. Similarly we can
define the adjoint $R^\dagger$ of $R$ acting on function $f:\R^6\mapsto \R$. In this case it turn out that
$R^\dagger=R$. Moreover, for a similar reason, $R$ maps  the space of homogeneous polynomials of degree $K$ into itself.

We thus start by describing the action of $\cL_S$ and $\cL^\dagger_{B(t)}$ on polynomials of degree at most $K$. We
consider the space $W_o^K$ of homogeneous polynomials of degree $K$ in $\vv$ with the basis formed by the
$\vv^{\vec\alpha}$ with $\|\vec\alpha\|_1=K$ and the space $W^K=\bigoplus_{k=0}^K W_o^k$ of polynomials of degree $\leq 
K$
together with the natural projection $Q_k$ from $W^K$ to $W^k_o$ for $k\leq K$. Analogously we consider the space
$U_o^K$ of symmetric homogeneous polynomials of degree $K$ in $\vv_1$ and $\vv_2$ with the basis formed by the
$\vv_1^{\vec\alpha_1}\vv_2^{\vec\alpha_2}+\vv_1^{\vec\alpha_2}\vv_2^{\vec\alpha_1}$ with $\|\alpha\|_1+\|\alpha\|_2=K$
and the space $U^K=\bigoplus_{k=0}^K U_o^k$ of symmetric polynomials of degree $\leq K$.

We can now summarize the above discussion, together with a few observations from \cite{BLP25,BPP}:
\begin{enumerate}
\item For every $k$, $B^\dagger(t)W_o^k\subset W^k$, while $RU_o^k\subset U_o^k$. 

\item $B^k:=Q_k B^\dagger(t)\big|_{W^k_o}$ is independent of $\mathcal M(t)$ and thus of $t$.

\item $R-I$ acting on $U_o^k$ is negative semi-definite while, for $k>0$, $B^k-I$ is negative
definite\footnote{\label{foot:self}We say that a matrix $A$ is negative definite if $A+A^T$ has all negative
eigenvalues. In our case the transpose are taken w.r.t. the basis discussed in the previous paragraph. Observe that from
\cite{BPP} it follows that $B^k-I$ and $R-I$ are diagonalizable with all negative eigenvalues.}.

\end{enumerate}

Observe that point 1 means that $Q_k B^\dagger(t)\big|_{W^l_o}=0$ if $k<l$, that is $B^\dagger(t)$ is block upper 
triangular w.r.t. the $W^k_0$. Similarly $R$ is block diagonal w.r.t the $U^k_0$.

We now return to polynomials in $\uv$. For every $\underline \alpha\in (\N^3)^M$ with $\sum_i \|\vec \alpha_i\|_1=K$ we 
can associate the degree $K$ monomial $\uv^{\underline \alpha}=\prod_i \vv_i^{\alpha_i}$. Let $\underline n_{\underline 
\alpha}:\N^3/\{0,0,0\}\mapsto \N$ be the {\it occupation numbers} of $\underline \alpha$, that is $\underline 
n_{\underline 
\alpha}(\vec\beta)=\#\{i\,|\, \vec \alpha_i=\vec \beta\}$. Then the polynomials
\[
e_{\underline n}(\uv)=\mathcal S(\uv^{\underline \alpha})\qquad\hbox{with}\qquad \underline n_{\underline 
\alpha}=\underline n
\]
and where $\mathcal S$ is the symmetrization operator:
\[
\mathcal S[p](\uv):=\frac 1{M!}\sum_\pi p(\vv_{\pi(1)},\ldots,\vv_{\pi(M)}).
\]
form a basis in the space $V^K_o$ of homogeneous polynomials of degree $K$ in $\uv$. Let moreover $V^K=\bigoplus_{k=0}^K
V_o^K$ the space of polynomials in $\uv$ of degree at most $K$ and $P_k$ the natural projection from $V^K$ to $V^k_o$
for $k\leq K$.

\begin{myclaim}\label{claim:1} Let $d_K={\rm dim}(V^K)$ and $L_{B(t)}\in \R^{d_K\times d_K}$ be the matrix 
representation of 
$\sum_{i=1}^M (B_i^\dagger(t)-\mathrm{Id})$ in the basis formed by the $e_{\underline n}$ with $\sum_{\vec 
\alpha}\underline n(\vec \alpha)=K$. 
Analogously let $L_R\in\R^{d_K\times d_K}$ be the matrix representation of  $(M-1)^{-1}\sum _{i<j}^M 
(R_{i,j}-\mathrm{Id})$. We have

\begin{enumerate}
\item $L_{B(t)}$ is block upper triangular while $L_R$ is block diagonal w.r.t. the $V^k_o$.
\item $L_{B(t)}$ does not depend on $M$ while $L_R=\overline L_R+ M^{-1} P$ where $\overline L_R$ does not depend on 
$M$ while $P$ is bounded uniformly in $M$.
\item $P^{k}L_{B(t)}|_{V^{k}_o}$ does not depend on $t$.
\item $P^{k}(L_R+L_{B(t)})|_{V^{k}_o}$, $k>0$, is negative definite (see footnote \ref{foot:self}).
\end{enumerate}

\end{myclaim}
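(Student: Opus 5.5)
The plan is to compute how $\sum_i (B_i^\dagger(t)-\mathrm{Id})$ and $(M-1)^{-1}\sum_{i<j}(R_{i,j}-\mathrm{Id})$ act on a basis element $e_{\underline n}=\mathcal S(\uv^{\underline\alpha})$, using only the single-particle facts 1--3 listed before the claim. Both operators commute with permutations of the particles, so they map symmetric polynomials to symmetric polynomials, and $V^K$ is invariant.

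\emph{Thermostat part.} Write $\uv^{\underline\alpha}=\prod_i\vv_i^{\vec\alpha_i}$. Then $B_i^\dagger(t)$ acts only on the factor $\vv_i^{\vec\alpha_i}\in W^{\|\vec\alpha_i\|_1}_o$. By point 1 it produces a polynomial of degree $\le\|\vec\alpha_i\|_1$ in $\vv_i$, and leaves the other factors unchanged. Hence the total degree does not increase, which gives the block upper triangular structure in item 1.

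The component of top degree is $B^{\|\vec\alpha_i\|_1}$ applied to that factor. By point 2 this does not depend on $t$, which gives item 3.

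For item 2, only particles with $\vec\alpha_i\neq 0$ contribute, because $B^\dagger 1=1$. After symmetrization, the coefficient of $e_{\underline m}$ is a sum over the nonzero occupation types $\vec\beta$, weighted by $\underline n(\vec\beta)$. These weights are combinatorial and do not involve $M$. Since $\sum_{\vec\beta}\underline n(\vec\beta)\le K$, only a bounded number of terms appear.

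\emph{Collision part.} Each $R_{i,j}$ preserves degree, by the block-diagonal property of $R$ on $U^k_o$, so $L_R$ is block diagonal; this is the rest of item 1. For item 2, I split the pairs $(i,j)$ into three classes.

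\begin{itemize}
\item If $\vec\alpha_i=\vec\alpha_j=0$, the term vanishes.
\item If exactly one of $\vec\alpha_i,\vec\alpha_j$ is nonzero, there are roughly $K(M-K)$ such pairs. After dividing by $M-1$ this gives a contribution $\overline L_R$ independent of $M$, plus an $O(M^{-1})$ correction.
\item If both are nonzero, there are at most $K^2$ pairs. Dividing by $M-1$ puts them in $M^{-1}P$.
\end{itemize}

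The symmetrization coefficients have to be tracked, but they are $M$-independent once we normalize by occupation numbers.

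\emph{Negativity (item 4).} This is the main step. Restrict to $V^k_o$ with $k>0$.

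In the collision part, the pair terms are negative semidefinite: $R-I$ is negative semidefinite on $U_o^k$ by point 3, and tensoring with the untouched variables preserves this, after checking that the chosen basis is compatible with the scalar product in which $R$ is self-adjoint. Summing, $P^kL_R|_{V^k_o}\le 0$.

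In the thermostat part, $P^kL_{B}$ is a sum over particles $i$ of $(B^{\|\vec\alpha_i\|_1}-I)$ acting on the $i$th factor. For every monomial of degree $k>0$, at least one $\vec\alpha_i\neq0$, and by point 3 the corresponding $B^{\|\vec\alpha_i\|_1}-I$ is strictly negative definite. Because there are only finitely many types $\vec\beta$ with $\|\vec\beta\|_1\le K$, there is a uniform gap $c_K>0$, independent of $M$ and $t$ (the latter by item 3). Strictly negative plus semidefinite is strictly negative, so $P^k(L_R+L_{B(t)})|_{V^k_o}$ is negative definite.

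The delicate point is the choice of inner product on $V^k_o$. The basis $e_{\underline n}$ is not orthonormal for the natural product structure, so I would work with a rescaled basis in which the product-tensor structure is orthogonal. Negative definiteness is preserved under that change of basis, but the ``transpose'' of footnote \ref{foot:self} must then be read in the rescaled basis. If necessary I would use diagonalizability from \cite{BPP} instead.
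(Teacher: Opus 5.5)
Your treatment of items 1--3 is the paper's argument. You act on a representative monomial whose nonzero $\vec\alpha_i$ sit in the first $\#(\underline\alpha)\le K<M$ slots, and use $B^\dagger(t)[1]=1$ and $R[1]=1$ so that only those slots contribute. You then read off $t$-independence from the top-degree blocks $B^k$, and split the pairs into three classes; the mixed class has exactly $\#(\underline\alpha)(M-\#(\underline\alpha))$ pairs, which gives the factor $(M-\#(\underline\alpha))/(M-1)=1+O(M^{-1})$.

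For item 4 the paper only says that it follows directly from the single-particle facts, and this is where your plan has a genuine gap. Your sentence ``negative definiteness is preserved under that change of basis'' is false for the notion in footnote~\ref{foot:self}, which is a property of the matrix in a given basis. Moreover, in the basis $e_{\underline n}$ of the statement your termwise argument is not available. Take $e_m=\mathcal S(v_{1,1}^2)$ and $e_n=\mathcal S(v_{1,1}v_{2,1})$. A direct computation gives $L_Re_m=-\frac4{15}e_m+\frac4{15}e_n+\dots$ and $L_Re_n=\frac4{15(M-1)}(e_m-e_n)+\dots$. Hence the principal $2\times2$ block of $L_R+L_R^T$ on $\{e_m,e_n\}$ has determinant $-\frac{16(M-2)^2}{225(M-1)^2}<0$ for $M\ge3$, so $L_R$ is not negative semidefinite in that basis. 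The $O(1)$ versus $O(M^{-1})$ asymmetry reflects the $M$-dependent norms of the $e_{\underline n}$. Your fallback to diagonalizability from \cite{BPP} does not help either: $L_R$ and $L_{B(t)}$ do not commute, and negativity of spectra is not additive.

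What your argument can prove is item 4 in a normalized basis. The compatibility check you postpone should be done from scratch rather than from point 3, which is formulated in unnormalized bases. A convenient choice is the Fischer product $\scal{p,q}=p(\partial)q\big|_{0}$ on polynomials in $\uv$. In it, monomials are orthogonal with factorial weights, the product tensorizes over particles, and composition with a matrix $A$ has adjoint given by composition with $A^T$. Consequently:
\begin{itemize}
\item $R$ is an average of compositions with the reflections $(\vv_1,\vv_2)\mapsto(\vv_1^*,\vv_2^*)$, hence a self-adjoint contraction.
\item The top-degree part of $B^\dagger(t)$ is $p\mapsto\int_{\mathds S^2}p(\vv-\scal{\vv,\omega}\omega)\,d\omega$, an average of orthogonal projections. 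It is self-adjoint and bounded by $(1-c_k)$ times the identity on degree $k>0$.
\end{itemize}
Your sum argument then shows that $P^k(L_R+L_{B(t)})|_{V^k_o}$ is self-adjoint with spectrum in $(-\infty,-c_K]$, uniformly in $M$ and $t$. This is negative definiteness with respect to $e_{\underline n}/\|e_{\underline n}\|$, not with respect to $e_{\underline n}$ as the statement requires. Since $\|e_{\underline n}\|^2$ is a fixed weight divided by the size of the permutation orbit of $\uv^{\underline\alpha}$, which is of order $M^{\#(\underline\alpha)}$, the result cannot be transported back with $M$-independent constants. For the bound $\|e^{L^{k,k}t}\|_2\le Ce^{-ct}$ used after the claim, you must therefore combine this uniform spectral gap with item 2.
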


\begin{proof}
Points 1, 3, and 4 follow directly from the observations above. Regarding point 3 we observe that
\[
\cL^\dagger_{B(t)}e_{\underline n}(\uv)=\mathcal S(\cL^\dagger_{B(t)}\uv^{\underline \alpha})=\mathcal S\left(\sum_{i=1}^M B_i^\dagger(t)[\vv_i^{\vec \alpha_i}] \prod_{k\not=i}\vv_k^{\vec \alpha_k}\right)
\]
for any $\underline \alpha$ such that $\underline n_{\underline \alpha}=\underline n$. Let $\#(\underline \alpha)$ be 
the number of $\vec \alpha_i\not =(0,0,0)$ in $\underline \alpha$. Since $\#(\underline \alpha)\leq K<M$, given 
$\underline n$ we can chose $\underline \alpha$ such that $\vec \alpha_i\not=0$ for $i\leq \#(\underline \alpha)$ while 
$\vec \alpha_i=0$ for $i>\#(\underline \alpha)$.We thus get
\[
\cL^\dagger_{B(t)}e_{\underline n}(\uv)=\mathcal S\left(\sum_{i=1}^{\#(\underline\alpha)} B_i^\dagger(t)[\vv_i^{\vec 
\alpha_i}] \prod_{k\not=i}\vv_k^{\vec \alpha_k}\right)
\]
that clearly does not depend on $M$.
Similarly we can write
\[
\cL_{S}e_{\underline n}(\uv)=\mathcal S\left(\frac 1{M-1}\sum_{i<j} R_{i,j}[\vv_i^{\vec \alpha_i}\vv_j^{\vec \alpha_j}] 
\prod_{k\not=i,j}\vv_k^{\vec \alpha_k}\right)=\mathcal S\left(\sum_{i\leq\#{\underline \alpha}} R_{i,M}[\vv_i^{\vec 
\alpha_i}] \prod_{k\not=i,k\leq\#{\underline \alpha} }\vv_k^{\vec \alpha_k}\right)+O\left(\frac1M\right).
\]
\end{proof}

We can now return to the evolution of the moments of $f_s$. 
We get
\[
\int \uv^{\underline\alpha}f_s(\uv)d\uv=\int \Phi_{s,0}^\dagger[e_{{\underline 
n}_{\underline\alpha}}](\vv)f_0(\uv)d\uv=\sum_{\underline n} m_{\underline n} A(t)_{{\underline n},{\underline 
n}_{\underline\alpha}}
\]
where 
\[
m_{\underline n}=\int e_{\underline n}(\uv)f_0(\uv)d\uv
\]
while $A(t)$ solves the equation
\[
\dot{A}(t) = A(t)(L_S + L_{B(t)}^\dagger); \qquad A(0) = \rm Id\, .
\]
We thus get
\[
E_K(f_s)\leq \sqrt{d_K}E_K(f_0)\left\|A(t)\right\|_2\, .
\]
Calling $L^{l,k}(t)=P_k(L_S + L_{B(t)}^\dagger)\big|_{V_o^l}$ we get 
$L^{k,l}=0$, for $k<l$ so that
\begin{equation}\label{eq:recur}
\begin{aligned}
P_KA(t)&=e^{L^{K,K}t}P_K\\
P_{K-1}A(t)&=e^{L^{K-1,K-1}t}P_K+\int_0^s e^{L^{K-1,K-1}(t-s)}L^{K-1,K}(t)P_KA(t)\\
&\vdots\\
P_{k}A(t)&=e^{L^{k,k}t}P_k+\sum_{l=k+1}^K\int_0^s e^{L^{k,k}(t-s)}L^{l,k}(t)P_kA(t)\, .
\end{aligned}
\end{equation}

The thesis now follow easily by solving recursively \eqref{eq:recur} and observing that, from point 2 and 4 of Claim 
\ref{claim:1} and the continuity of eigenvalues and eigenvectors, it follows that $\|e^{L^{k,k}t}\|_2\leq Ce^{-ct}$, 
for suitable constant $C,c>0$.

\section{Proofs from Subsection \ref{subsec:1part}}\label{app:proofs}

We collect here the proofs of some of the propositions from Subsection \ref{subsec:1part}.

\begin{proof}[Proof of Proposition \ref{Prop:Delta}]
From \eqref{eq:IVPM1} we see that $\vec \delta_\sigma$ satisfy 
\[
\frac{d}{dt}\begin{pmatrix}
\vec \delta_+\\\vec \delta_-
\end{pmatrix}=
\frac13\begin{pmatrix} 1+\frac MN & -1\\ -1 & 1+\frac MN
\end{pmatrix}
\begin{pmatrix}
\vec \delta_+\\\vec \delta_-
\end{pmatrix}
\]
so that we get
\[
\begin{aligned}
\vec \delta_+(s)+\vec \delta_-(s)=&e^{-\frac s3\left(2+\frac MN\right)}(\vec \delta_+(0)+\vec \delta_-(0))\\
\vec \delta_+(s)-\vec \delta_-(s)=&e^{-\frac s3\frac MN}(\vec \delta_+(0)-\vec \delta_-(0))\, .
\end{aligned}
\]
Equation \eqref{eq:Deltaps} now follows using that
\[
\|\vec \delta_+\|^2+\|\vec \delta_-\|^2=\frac 12\left (\|\vec \delta_++\vec \delta_+\|^2+\|\vec 
\delta_+-\vec \delta_-\|^2\right)\, .
\]
Similarly we get
\[
\frac{d}{dt}\begin{pmatrix}
\epsilon_+\\\epsilon_-
\end{pmatrix}=
\frac13\begin{pmatrix} 1+\frac MN & -1\\ -1 & 1+\frac MN
\end{pmatrix}
\begin{pmatrix}
\epsilon_+\\\epsilon_-
\end{pmatrix}
 - \frac 19\begin{pmatrix} 1-\frac MN & 1\\ 1 & 1-\frac MN
\end{pmatrix}
\begin{pmatrix}
\|\vec \delta_+\|^2\\\|\vec \delta_-\|^2
\end{pmatrix}
\]
so that
\[
\begin{aligned}
\epsilon_+(s)+\epsilon_-(s)=&e^{-\frac s3\left(2+\frac MN\right)}(\epsilon_+(0)+\epsilon_-(0))-\frac 19\left( 2- \frac 
MN\right)\int_0^s e^{-\frac 
{s-s'}3\left(2+\frac MN\right)}(\|\vec \delta_+(s')\|^2+\|\vec \delta_-(s')\|^2)ds'\\
\epsilon_+(s)-\epsilon_-(s)=&e^{-\frac s3\frac MN}(\epsilon_+(0)-\epsilon_-(0))-\frac M{9N}\int_0^s e^{-\frac 
{(s-s')M}{3N}}(\|\vec \delta_+(s')\|^2-\|\vec \delta_-(s')\|^2)ds'\, .
\end{aligned}
\]
Using that
\[
\frac 12(|\epsilon_++\epsilon_-|+|\epsilon_+-\epsilon_-|)\leq |\epsilon_+|+|\epsilon_-|\leq 
|\epsilon_++\epsilon_-|+|\epsilon_+-\epsilon_-|
\]
we easily get \eqref{eq:DeltaTs}.

Finally we observe that, for every $s$,
\[
N^*\tau_\infty=N\tau_+(s)+N\tau_-(s)+M\tau_S(s)+N\|\vec m_+(s)-\vec m_S(s)\|^2+N\|\vec m_-(s)-\vec m_S(s)\|^2
\]
so that
\[
N^*(\tau_+(s)-\tau_\infty)=(N+M)(\tau_+(s)-\tau_S(s))-N(\tau_-(s)-\tau_S(s))+N(\|\vec \delta_+(s)\|^2+\|\vec 
\delta_-(s)\|^2)
\]
from which \eqref{eq:ttinf} follows.
\end{proof}
\medskip

\begin{proof}[Proof of Proposition \ref{Prop:Theta}]
For convenience of notation, we begin by writing
\[
\Theta_{\sigma,t}(x,y) := e^{i\langle x, \vec m_s(t)\rangle} \cdot e^{i\langle y, \vec m_\sigma(t)\rangle}.
\]
By recalling the definition of $H_s[\widehat \Gamma^M_{\infty,s}](\uxi,\veta_{1})$ and proceeding as in Proposition 
\ref{Prop:tedious}, we can write
\[
H_s[\widehat \Gamma^M_{\infty,s}](\uxi,\veta)=\sum_{\sigma=\pm}\sum_{k=1}^M \widehat \Gamma_{\infty,s}^{M - 
1}(\uxi^k) G_{\sigma',s}(\veta_{\sigma'})D_{3,\sigma}(\vxi_k,\veta_{\sigma})
\]
where
\[
\begin{aligned}\label{eq:firstexpansion}
D_{3,\sigma}(\vxi,\veta)&= 
\int_{\mathds
\mathds
\mathds S^2} \Theta_{\sigma, s}(\vxi^*, \veta^*) \widehat \Gamma_{\infty}(\vxi^*) \widehat\Gamma_{\sigma, s}(\veta^*) 
- 
\Theta_{\sigma,s}(\vxi^*, 0^*) \widehat\Gamma_{\infty}(\vxi^*) \widehat\Gamma_{\sigma, s}(0^*) \Theta_{\sigma, s}(0, 
\veta) \widehat\Gamma_{\sigma, s}(\veta) \, d\omega 
\\
&\quad\quad- 
i\frac 1 3 \Theta_{\sigma, s}(\vxi, \veta) \widehat \Gamma_{\sigma, s}(\veta)\widehat\Gamma_{\infty}(\vxi)\langle 
\veta, \vec m_S(s) - \vec m_\sigma(s)\rangle
\end{aligned}
\]
By writing 
$
\widehat \Gamma_{\sigma, s}(\veta) = \widehat \Gamma_{\sigma, s}(\veta) - \widehat \Gamma_\infty(\veta) + \widehat 
\Gamma_\infty(\veta)
$ 
we have
\begin{align}
\begin{split}\label{eq:secondexpansion}
\Theta_{\sigma, s}(\vxi^*, \veta^*) \widehat \Gamma_{\infty}(\vxi^*) \widehat\Gamma_{\sigma, s}(\veta^*) 
&- 
\Theta_{\sigma,s}(\vxi^*, 0^*) \widehat\Gamma_{\infty}(\vxi^*) \widehat\Gamma_{\sigma, s}(0^*) \Theta_{\sigma, s}(0, 
\veta) \widehat\Gamma_{\sigma, s}(\veta)
\\
&= 
\Theta_{\sigma, s}(\vxi^*, \veta^*) \widehat \Gamma_{\infty}(\vxi^*) \left(\widehat \Gamma_{\sigma, s}(\veta^*) - 
\widehat \Gamma_\infty(\veta^*)\right)
\\
&\quad- 
\Theta_{\sigma,s}(\vxi^*, 0^*) \widehat\Gamma_{\infty}(\vxi^*) \left(\widehat \Gamma_{\sigma, s}(0^*) - \widehat 
\Gamma_\infty(0^*)\right) \Theta_{\sigma, s}(0, \veta) \widehat\Gamma_{\sigma, s}(\veta)
\\
&\quad+
\widehat \Gamma_{\infty}(\vxi)\left(\Theta_{\sigma, s}(\vxi^*, \veta^*) \widehat\Gamma_{\infty}(\veta)
- 
\Theta_{\sigma,s}(\vxi^*, 0^*) \Theta_{\sigma, s}(0, \veta) \widehat\Gamma_{\sigma, s}(\veta) \right) \, .
\end{split}
\end{align}
Now observe that
\begin{align}
\label{eq:firstestimate}
\left| \Theta_{\sigma, s}(\vxi^*, \veta^*) \widehat \Gamma_{\infty}(\vxi^*) \left(\widehat \Gamma_{\sigma, s}(\veta^*) 
- \widehat \Gamma_\infty(\veta^*)\right) \right| 
\leq  \|\veta^*\|^2 d_2(\Gamma_{\sigma, s}, \Gamma_{\infty}) 
= 
\frac 12  \|\veta^*\|^2 |\tau_{\sigma}(s) - \tau_\infty|
\end{align}
and similarly
\begin{align}
\begin{split}
\label{eq:secondestimate}
\left| \Theta_{\sigma,s}(\vxi^*, 0^*) \widehat\Gamma_{\infty}(\vxi^*) \left(\widehat \Gamma_{\sigma, s}(0^*) - \widehat 
\Gamma_\infty(0^*)\right) \widehat\Gamma_{\sigma, s}(\veta)\Theta_{\sigma, s}(0, \veta) \right| 
&=
\frac 12 \|0^*\|^2 |\tau_\sigma(s) - \tau_\infty| \, .
\end{split}
\end{align}
Meanwhile, writing 
\[
\Theta_{\sigma, s}(x,y) = 1 + i\langle x, \vec m_s(t)\rangle + i\langle y, \vec m_\sigma(t)\rangle - \left(\langle x, 
\vec m_s(t)\rangle + \langle y, \vec m_\sigma(t)\rangle\right)^2 \int_0^1 \Theta_{\sigma, s}(tx, ty) (1 - t) \, dt
\]
and setting $\vec x=\langle\omega, \vxi\rangle\omega$ and  $\vec y=\langle\omega, \veta-\vxi\rangle\omega$, yields
\begin{equation}\label{eq:thirdestimate}
\begin{aligned}
|D_{3,\sigma}(\vxi,\veta)|&= 
\bigg|
\int_{\mathds
\mathds
\mathds S^2} i \langle\omega, \eta\rangle\langle\omega , \vec m_S(s) - \vec m_\sigma(s)\rangle - \frac i 3 \langle 
\veta, \vec m_S(s) - \vec m_\sigma(s) \rangle
\\
&\quad +
\langle\omega, \vxi\rangle^2\langle \omega  , \vec m_S(s) - \vec m_\sigma(s) \rangle^2 \int_0^1 
\Theta_{\sigma,s}(t\vec x, -t\vec x) (1 - t) \, dt
\\
&\quad -
\langle\omega, \veta - \vxi\rangle^2\langle\omega, \vec m_S(s) - \vec m_\sigma(s)\rangle^2 \int_0^1 \Theta_{\sigma, 
s}((t\vec y, -t\vec y) (1 - t) \, dt \, d\omega \bigg|
\\
&\leq 
4(\|\vxi\|^2 + \|\veta\|^2) \|\vec m_S(s) - \vec m_\sigma(s)\|^2 
\end{aligned}
\end{equation}

Combining \eqref{eq:firstestimate}, \eqref{eq:secondestimate}, and \eqref{eq:thirdestimate} with 
\eqref{eq:firstexpansion} and \eqref{eq:secondexpansion} yields
\begin{align*}
\frac{\left|H_s[\hat f_{\infty,s}](\uxi,\veta_{\pm,1})\right|}
{\|\uxi\|^2 
+ 
\|\veta\|}  
&\leq \sum_{\sigma}\sum_{j = 1}^M \frac{4(\|\uxi_j\|^2 + \|\veta\|^2)(|\tau_\sigma(s) - \tau_\infty| + \|\vec m_S(s) - 
\vec m_\sigma(s)\|^2)}{\|\uxi\|^2 
+ 
\|\veta\|} 
\\
&\leq 
C M\sum_\sigma  \left(|\tau_\sigma(s) - \tau_\infty| + \|\vec m_S(s) - \vec m_\sigma(s)\|^2\right) \, .
\end{align*}

\end{proof}

\end{document}